\newcommand{\norm}[1]{\lVert #1 \rVert}
\newcommand{\fnorm}[1]{\norm{#1}_F^2}
\newcommand{\eps}{\epsilon}
\newcommand{\del}{\delta}
\newcommand{\R}{\mathbb{R}}
\newcommand{\E}{\mathbb{E}}
\newcommand{\spn}{\text{span}}
\newcommand{\subb}[1]{_{(#1)}}
\newcommand{\err}{err}
\newcommand{\sse}{\text{SSE}}
\newcommand{\e}{\mathbf{e}}
\newcommand{\iprod}[1]{\langle #1 \rangle}
\newcommand{\poly}{\text{poly}}
\newcommand{\nvspace}{\vspace{-.1in}}
\newcommand{\cI}{\mathcal{I}}
\newcommand{\cO}{\mathcal{O}}
\newtheorem{theorem}{Theorem}
\newtheorem{lemma}[theorem]{Lemma}
\newtheorem{claim}{Claim}
\newtheorem{corr}[theorem]{Corollary}
\newtheorem{conjecture}{Conjecture}
\title{Low Rank Approximation in the Presence of Outliers}
\author{Aditya Bhaskara\thanks{School of Computing, University of Utah. {\em Email: }\textsf{bhaskara@cs.utah.edu}} \and Srivatsan Kumar\thanks{School of Computing, University of Utah. {\em Email: }\textsf{seezha@cs.utah.edu}}}
\date{}
\begin{document}

\maketitle
\begin{abstract}
We consider the problem of principal component analysis (PCA) in the presence of outliers.  Given a matrix $A$ ($d \times n$) and parameters $k, m$, the goal is to remove a set of at most $m$ columns of $A$ (outliers), so as to minimize the rank-$k$ approximation error of the remaining matrix (inliers).  While much of the work on this problem has focused on recovery of the rank-$k$ subspace under assumptions on the inliers and outliers, we focus on the approximation problem.  Our main result shows that sampling-based methods developed in the outlier-free case give non-trivial guarantees even in the presence of outliers.  Using this insight, we develop a simple algorithm that has bi-criteria guarantees.  Further, unlike similar formulations for clustering, we show that bi-criteria guarantees are unavoidable for the problem, under appropriate complexity assumptions. 
\end{abstract}
\section{Introduction}
Low rank approximation is one of the most fundamental and well-studied questions in matrix analysis.  It is widely used for dimension reduction, sketching, denoising, and more broadly to find ``structure'' in large matrices.  Usually referred to as principal component analysis (PCA) or singular value decomposition (SVD), low rank approximations are a staple tool in the analysis of large matrix data. 

We consider the problem of low rank approximation in the presence of {\em outlier} columns.  Studying the effect of adversarial outliers on known algorithms as well as the problem complexity has become a significant theme in recent research.  It is motivated by the practical importance of dealing with noise in data (both intrinsic as well as adversarial).  While principal components are often robust to small corruptions of the matrix (which is why they are useful in denoising, e.g.~\cite{McSherry}), this typically requires the noise to have a small spectral norm (see~\cite{StewartSun}).  This is an unrealistic assumption in many settings, especially when entire columns can be (arbitrarily and possibly adversarially) noisy.  

Dealing with noise has also been well studied in the statistics community (the books of~\cite{Huber, HuberNew} present many of the classic results in the area).  Robust estimators for computing parameters of distributions such as the mean, variance, etc. have been extensively studied.  More recently, questions of this nature have received a lot of attention in theoretical computer science, motivated by connections to learning distributions.  The works of~\cite{MoitraNew, VempalaNew}, as well as subsequent works (see~\cite{SteinhardtCV18} and references therein) have obtained novel guarantees on recovery under noise, using semidefinite programming and other techniques. 

Meanwhile, clustering problems have long been studied in the presence of outliers, from the point of view of approximation algorithms.  Starting with the work of~\cite{Charikar01}, and subsequent improvements (see \cite{Cluster2, Cluster1, RaviShi}), we now have a  good understanding of the approximability of clustering with outliers.  Many of these works use linear programming relaxations that help identify the outliers.  One of our motivations is to obtain a similar understanding for PCA.

\nvspace
\paragraph{PCA with outliers.}  We now define the formulation we study in this paper.  It is motivated by the work on clustering algorithms discussed above.  Informally,  given a matrix $A$ ($d \times n$), we wish to find the best rank $k$ approximation to $A$,  after throwing away $m$ columns of our choice.  This can be formally stated as the following optimization problem:
\[ \text{minimize }~ \fnorm{A- L -N}, \quad \text{subject to rk$(L) \le k$, and $N$ having at most $m$ non-zero columns.} \]
This formulation is has also been studied in the signal processing literature~\cite{Chen, CandesRobust}, where it is referred to as Robust PCA.  (A different notion was given that name in~\cite{Brubaker}.)   The formulation also makes sense with different matrix norms, though in this paper, we focus mostly on the Frobenius norm.  Our guarantees also work for entry-wise $\ell_p$ norm error, as we will see.  

\subsection{Background and related work}
The statistics literature on robust estimation of parameters is extensive, and we refer to the book of~\cite{HuberNew} for a review.  We now discuss some of the works most relevant to our results.

As mentioned above, the PCA with outliers problem has been well studied in the signal processing community.  The work of~\cite{CandesRobust} shows that by formulating the problem as an optimization problem with appropriate regularization terms, we can efficiently recover the optimal rank $k$ subspace, under certain conditions on the input.  Informally, these conditions say that the inliers must be ``well spread'', which in turn implie the uniqueness of the target subspace (this is similar in spirit to works on matrix completion). The worst-case problem was studied in~\cite{Caramanis}, under a ``coverage'' variant of the objective (in which the goal is to find a rank $k$ subspace that covers as large a fraction of the inlier mass as possible).  The results here do not imply a multiplicative approximation to our formulation, and are thus incomparable.

We also note that while the problem definition naturally suggests $m \ll n$, the problem makes sense when we have a large fraction (approaching $1$) of outliers.  This version was studied in the work of~\cite{HardtMoitra}.  They formulate the problem as follows: given a set of points in $\R^n$ with the condition that an $\alpha$ fraction lie in a $d$-dimensional subspace, the goal is to find the subspace.  Informally, under the condition that the ``outliers'' (the points not in the subspace) are in general position, they develop a simple random sampling algorithm, for the case $\alpha > d/n$.  They also show hardness results for this problem.  Indeed, we note that the hardness results we present in Section~\ref{sec:hardness} are consequences of their approach.\footnote{We thank Amit Deshpande and Ravishankar Krishnaswamy for suggesting this.}  

The main issue with $m \approx n$ is that in the absence of any strong assumptions (along the lines of the above works), the outliers could themselves have an approximate rank-$k$ structure, thus making the problem ill-posed.  For problems such as mean estimation,~\cite{Charikar2017} recently showed that in such cases, something very elegant is possible: we can come up with a small set of {\em candidate} solutions, one of which will be a good solution to the inliers.  This framework was originally introduced in~\cite{Balcan08}, and together with some very interesting new ideas, it has been used to obtain results for classic problems such as learning mixtures of Gaussians under separation~\cite{Gauss1, Gauss2, Gauss3}. 

Finally, as we mentioned above, outlier-robust approximation algorithms are quite well studied for different variants of clustering~(see \cite{Charikar01, Cluster2, Cluster1, RaviShi} and references therein).  For many variants, while the initial algorithms were bi-criteria approximations similar to ours, it turns out that one can actually obtain constant factor approximation algorithms without violating either the bound on the number of centers, or the number of outliers.  This is in contrast to what turns out to be possible for PCA, as we will see. 

\nvspace
\paragraph{Robust algorithmic methods.}  The high level goal in the works above (as well as ours) is the development of algorithmic techniques that work in the presence of outliers.  To this end, linear and semidefinite relaxations have been powerful in identifying the inliers/outliers, and have emerged as a powerful technique.  This is seen in the works of~\cite{MoitraNew, Charikar2017}, the works on clustering mentioned above, and also the extensive literature on semi-random models (see~\cite{FeigeKilian, MMV} and references therein).  Recently,~\cite{SteinhardtCV18} studied the abstract question of when an estimator can be computed robustly, and defined a condition they call {\em resilience}.  This is a property of the inliers that allows estimation (in principle) in the presence of a small fraction of arbitrary outliers.  In this context, our work shows that random sampling based algorithms could be powerful in finding structure in the presence of noise, albeit with weaker (bi-criteria) guarantees.  This is the idea behind heuristics such as RANSAC, which we will now discuss.

\nvspace
\paragraph{Heuristics.}   There have been several heuristics developed specifically for the robust PCA problem, as well as more generally for estimation on noisy data.  One example is the class of ``Lloyd-style'' heuristics, where the idea is to fit a rank-$k$ subspace to the full data set, remove a small number of points that are ``far away'' from the space, and recurse (see~\cite{Cluster1, pursuit}).  Another heuristic the famous RANSAC (random sampling and consensus) paradigm~\cite{ransac}.  The idea here is that if we randomly sample a subset of the points and fit a $k$-subspace, then different samples yield spaces that align ``along the inliers'' but have differing components along the outliers.  Assuming the outliers do not have ``structure'', we can expect that an averaging (consensus) step helps zero in on the inliers. One issue with the above is that the outliers could have an approximate rank $k$ structure.  Intuitively, this is one of the reasons for which we only obtain bi-criteria guarantees. 

\subsection{Our results}\label{sec:results}
In the remainder of the paper, we will write $A = B+N$, where $B$ consists of the inliers (and zeros in the outlier columns), and $N$ contains the outliers (and zeros in the inlier columns).  Thus, in this notation, the problem can be reformulated as that of finding such a decomposition $B+N$ where $N$ has at most $m$ non-zero columns, with the goal of minimizing $\fnorm{B - B_k}$, where $B_k$ is the best rank-$k$ approximation of $B$. Also, we will throughout think of $\eps, \del$ as parameters in the range $(0,1]$.

We present two (incomparable) algorithmic results.  The first is a simple algorithm based on iteratively computing a subspace that captures more and more ``mass'' of the matrix, while throwing away a small number of outliers.  The second algorithm, which is our main contribution, is inspired by {\em adaptive sampling}, an idea that has been successful in obtaining coresets and bi-criteria approximations for problems such as clustering and PCA~\cite{DeshpandeVempala, Silvio}.   To describe the first result, we define the following ``rank-$k$ condition number'':
\[ \Lambda_k := \frac{\fnorm{A}}{\norm{B-B_k}_F^2}. \]

\begin{theorem}\label{thm:adaptive}
There is an efficient algorithm that takes as input a matrix $A$ as above, parameters $k, m, \eps$, and outputs a decomposition $A = B' + N'$ along with a subspace $V$, such that the following properties hold: (a) $N'$ has at most $m \log (\Lambda_k/\eps)$ columns, (b) the space $V$ satisfies the property 
\[ \fnorm{\Pi_{V}^\perp B'} \le (1+\eps) \fnorm{B - B_k},\]
and (c) the dimension of $V$ is at most $k \log (\Lambda_k / \eps)$.  
\end{theorem}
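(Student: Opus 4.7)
The plan is to design an iterative algorithm that grows a candidate subspace $V_t$ and a set $O_t$ of suspected outlier indices in tandem. First I would initialise $V_0 = \{0\}$, $O_0 = \emptyset$, and write $A_{-S}$ for the matrix obtained from $A$ by zeroing the columns indexed by $S$. At iteration $t$ I form $X_t := \Pi_{V_t}^{\perp} A_{-O_t}$, append to $V$ the rank-$k$ subspace $W_t \subseteq V_t^{\perp}$ spanned by the top-$k$ left singular vectors of $X_t$ (so that $V_{t+1} := V_t \oplus W_t$), then compute $Y_t := \Pi_{V_{t+1}}^{\perp} A_{-O_t}$ and let $T_t$ be the set of indices of the $m$ columns of $Y_t$ with largest norm, setting $O_{t+1} := O_t \cup T_t$. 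After $T = \lceil \log(\Lambda_k / \eps) \rceil$ rounds I output $V := V_T$, $N'$ equal to $A$ on columns in $O_T$ (zero elsewhere), and $B' := A - N'$. Properties (a) and (c) are immediate from the per-iteration bookkeeping.

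The analysis pivots on the potential $\Phi_t := \fnorm{\Pi_{V_t}^{\perp} A_{-O_t}}$, which at termination equals the target quantity $\fnorm{\Pi_V^{\perp} B'}$ appearing in (b). I aim to prove a contraction
\[ \Phi_{t+1} \le \tfrac{1}{2}\, \Phi_t + c \cdot \opt, \qquad \opt := \fnorm{B - B_k}, \]
for an absolute constant $c$; unrolling from $\Phi_0 = \fnorm{A} = \Lambda_k \cdot \opt$ over $T \approx \log_2(\Lambda_k / \eps)$ rounds then yields $\Phi_T = (1 + O(\eps))\,\opt$.

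To prove the contraction I would decompose $X_t = X_t^I + X_t^O$ into inlier and outlier parts, where $X_t^I := \Pi_{V_t}^{\perp} B_{-O_t}$ and $X_t^O := \Pi_{V_t}^{\perp} N_{-O_t}$ have disjoint column supports, so their squared Frobenius norms add. The natural comparison is against $U := \Pi_{V_t^{\perp}}(\spn(B_k))$, a rank-$\le k$ subspace sitting inside $V_t^{\perp}$. Optimality of $W_t$ for $X_t$ gives $\fnorm{\Pi_{W_t}^{\perp} X_t} \le \fnorm{\Pi_U^{\perp} X_t}$, and a direct calculation rewrites the inlier part of the right-hand side as $\fnorm{\Pi_{V_t + \spn(B_k)}^{\perp} B_{-O_t}} \le \fnorm{\Pi_{B_k}^{\perp} B} = \opt$, which is the source of the additive $\opt$ term. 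Because $X_t^O$ (and hence $Y_t^O := \Pi_{V_{t+1}}^{\perp} N_{-O_t}$) is supported on at most $m$ columns, the top-$m$ column removal applied to $Y_t$ strips away at least $\fnorm{Y_t^O}$ of mass, so $\Phi_{t+1} \le \fnorm{Y_t^I}$. A case split on whether $\fnorm{X_t^O}$ exceeds $\Phi_t / 2$ or not then closes the argument: in the outlier-heavy case the column-removal step is what halves $\Phi$, and in the inlier-heavy case the SVD step does.

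The hard part will be sharpening the constant. The argument above naturally yields a $(2+\eps)$-style bound rather than the claimed $(1+\eps)$, since the recursion $\Phi \le \tfrac12 \Phi + c\cdot\opt$ has its fixed point near $2c\cdot\opt$. To reach $(1+\eps)$ I expect to need either a finer comparison subspace---augmenting $\Pi_{V_t^{\perp}}(\spn(B_k))$ by the directions along the $m$ surviving large outliers so as to absorb the $\fnorm{\Pi_{W_t} X_t^O}$ residual more tightly---or a single clean-up round (paid for by slack in the $\log(\Lambda_k/\eps)$ iteration budget) that inflates the rank or the outlier count by a small factor to wipe out the leftover additive $\opt$.
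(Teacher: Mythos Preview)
Your algorithm differs from the paper's in two ways: (i) the paper uses the guess $\xi$ for the optimum to \emph{choose}, at each step, between removing the $m$ heaviest residuals or enlarging the subspace (never both), whereas you do both every round; (ii) when the paper enlarges the subspace it takes the full rank-$((j{+}1)k)$ SVD of the surviving columns of $A$ rather than the top-$k$ SVD of the residual $X_t$ (indeed the paper explicitly warns against your choice in a footnote). Both schemes pay $k$ dimensions and $m$ outliers per round and run for $\log(\Lambda_k/\eps)$ rounds, so the budgets match; the paper's case split with threshold $\tfrac12(\mu_j-\xi)$ is what lets it prove the sharp recursion $\mu_{j+1}\le(\mu_j+\xi)/2$ whose fixed point is $\xi$, hence $(1+\eps)$.

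Your own analysis is one line away from the same recursion, so the ``hard part'' you flag is a non-issue. You already establish (a) $\Phi_{t+1}\le\fnorm{Y_t^I}$ from the top-$m$ removal, (b) $\fnorm{Y_t^I}\le\fnorm{X_t^I}=\Phi_t-\fnorm{X_t^O}$ since $Y_t^I=\Pi_{W_t}^{\perp}X_t^I$, and (c) $\fnorm{Y_t}\le\opt+\fnorm{X_t^O}$ from optimality of $W_t$ against $U$. From (c), $\fnorm{Y_t^I}=\fnorm{Y_t}-\fnorm{Y_t^O}\le\opt+\fnorm{X_t^O}$. Now simply take the \emph{minimum} of (b) and this bound rather than splitting at $\Phi_t/2$: since $\min(x,y)\le(x+y)/2$,
\[
\Phi_{t+1}\;\le\;\fnorm{Y_t^I}\;\le\;\min\bigl(\Phi_t-\fnorm{X_t^O},\ \opt+\fnorm{X_t^O}\bigr)\;\le\;\tfrac12(\Phi_t+\opt),
\]
which is exactly the paper's recursion and yields $(1+\eps)\opt$ after $\log(\Lambda_k/\eps)$ rounds. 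No augmented comparison subspace or clean-up round is needed. The footnote's worry about residual-SVD does not bite here because you only ever compare $W_t$ to the projected space $U=\Pi_{V_t^{\perp}}(\spn(B_k))$, and for $U\subseteq V_t^{\perp}$ the identity $\Pi_U^{\perp}\Pi_{V_t}^{\perp}=\Pi_{V_t+\spn(B_k)}^{\perp}$ that you use is exact. One small bookkeeping point: your algorithm as written needs $\Lambda_k$ (hence $\opt$) to set $T$; the paper handles this by enumerating $\poly$-many guesses for $\opt$, and you will need the same device for termination.
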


The algorithm above violates the bounds on the number of outliers and the dimension of the space by a factor $\log (\Lambda_k/\eps)$.  Thus it is interesting in the regimes where $m$ is small compared to the number of columns of $A$, and the quantity $\Lambda_k$ is ``not too large''.  For instance, in some practical settings, one might be interested in capturing say 99\% of the mass of a matrix using a small subspace, while excluding a small number of outliers. In such a case, the $\Lambda_k$ is a constant, and all the additional factors are small.

Our second (and main) result has a much better dependence on the parameters.  It has as an additional input a parameter $\delta$, which controls the number of outliers the algorithm outputs.

\begin{theorem}\label{thm:2}
There is an efficient algorithm that takes as input a matrix $A$ as above, parameters $k, m, \eps, \del$, and outputs a decomposition $A = B' + N'$ along with a subspace $V$, such that the following properties hold:  (a) $N'$ has at most $(1+\del) m$ columns, (b) the space $V$ satisfies the property $\fnorm{\Pi_{V}^\perp B'} \le (1+\eps) \fnorm{B - B_k}$, and
(c) $V$ is the span of a subset of the columns of $A$, of size at most
\[ O \left( \frac{k}{\eps^6} \left( \log (n/m) + \frac{2}{\del} \right) \right). \]
\end{theorem}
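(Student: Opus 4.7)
}

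The approach is to build $V$ by adaptive sampling on $A$ and then to select $N'$ greedily as the $(1+\del)m$ columns of $A$ with the largest squared residual to $V$. A clean reduction eliminates the subtlety of ``which $m$ columns should we treat as outliers'': since the true outlier set $N$ (padded with any $\del m$ extra columns of $A$) is a feasible choice in the ``maximize $\fnorm{\Pi_V^\perp N'}$ over $(1+\del)m$-sized supports'' problem, we always have $\fnorm{\Pi_V^\perp N'} \ge \fnorm{\Pi_V^\perp N}$. Consequently $\fnorm{\Pi_V^\perp B'} = \fnorm{\Pi_V^\perp A} - \fnorm{\Pi_V^\perp N'} \le \fnorm{\Pi_V^\perp A} - \fnorm{\Pi_V^\perp N} = \fnorm{\Pi_V^\perp B}$, and it suffices to produce $V$ satisfying $\fnorm{\Pi_V^\perp B} \le (1+\eps) \fnorm{B - B_k}$.

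The algorithm starts with $V = \{0\}$ and repeatedly samples a column of $A$ with probability proportional to $\norm{\Pi_V^\perp a_j}^2$, adding it to $V$. To analyze the process, I would classify each step as \emph{inlier-heavy} if $\fnorm{\Pi_V^\perp B} \ge (1/2) \fnorm{\Pi_V^\perp A}$ and \emph{outlier-heavy} otherwise. In an inlier-heavy step, the sampled column is an inlier with probability at least $1/2$, and conditionally it is drawn from the residual distribution over columns of $B$. A standard adaptive sampling argument (along the lines of~\cite{DeshpandeVempala}) then shows that $O(k/\eps^{6})$ inlier samples, repeated over $O(1/\del)$ refinement phases, suffice to drive $\fnorm{\Pi_V^\perp B} - \fnorm{B-B_k}$ below $\eps \fnorm{B-B_k}$; the $1/\del$ factor reflects the strengthening needed so that the guarantee is robust under the composition with the greedy outlier selection.

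In an outlier-heavy step, a constant fraction of samples come from the (at most $m$) outlier columns. A potential argument on how the outliers' share of the total residual shrinks each time a heavy outlier gets absorbed into $V$ would show that there can be at most $O(\log(n/m))$ such steps before the process transitions (with high probability) permanently to the inlier-heavy regime. Adding the two budgets gives a total sample count of $O\!\left(\frac{k}{\eps^{6}}\bigl(\log(n/m) + \tfrac{2}{\del}\bigr)\right)$, which matches the claimed bound on the dimension of $V$.

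The main obstacle is the potential argument controlling the number of outlier-heavy steps. A naive analysis based on the absolute quantity $\fnorm{\Pi_V^\perp N}$ gives only an $O(m)$ bound, since a single outlier-heavy sample might decrease this by only a $(1 - 1/m)$ factor (by Cauchy-Schwarz applied to the outlier residuals). To get a $\log(n/m)$ dependence one must instead track the \emph{relative} share of outliers in the total residual, exploit the fact that we only need to push this share below a constant (not to zero), and use the bounded number of distinct outlier columns to argue that the share must drop geometrically. Tying this together with the inlier-heavy analysis requires a martingale concentration argument to convert the expected guarantees of adaptive sampling into high-probability ones, and a careful choice of the classification threshold so that the two phases compose without losing additional factors of $\eps$ or $\del$.
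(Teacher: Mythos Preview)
Your approach is genuinely different from the paper's and contains a real gap. The paper does \emph{not} use norm-based adaptive sampling on $A$; it uses \emph{uniform} sampling together with a ``mark-as-covered and recurse'' scheme (Algorithm~\ref{alg:2}). The $\log(n/m)$ and $1/\del$ factors arise there from counting recursive calls: each call removes an $\eps^3$ fraction of the remaining non-outlier columns, and the uniform sample size $n_0$ is inflated by the factor $\alpha/(\alpha-1)$ (with $\alpha = N/m$) so that enough samples land among the inliers. The recursion halts once $N \le (1+\del)m$, at which point \emph{all} remaining columns are declared outliers; outliers are never ``absorbed into $V$''.

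The gap in your argument is exactly the step you flag as the main obstacle, and it does not close. Take $B$ to be exactly rank~$k$ and let $N$ consist of $m$ mutually orthogonal columns, each orthogonal to the column space of $B$, each of squared norm $M$ with $M \gg \fnorm{B}$. Norm-based sampling on $A$ then picks an outlier almost surely at every step; after $j<m$ such picks, $\fnorm{\Pi_V^\perp N} = (m-j)M$ while $\fnorm{\Pi_V^\perp B}$ has not changed, so the outlier share of the residual stays arbitrarily close to $1$ until all $m$ outliers have been sampled. Neither the absolute outlier mass nor its relative share drops geometrically; you genuinely need $\Theta(m)$ outlier-heavy steps, not $O(\log(n/m))$. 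Your clean reduction in the first paragraph (greedy choice of $N'$) is correct, but it cannot rescue the sampling: you still need $\fnorm{\Pi_V^\perp B}\le (1+\eps)\fnorm{B-B_k}$, and adaptive sampling on $A$ simply does not deliver a $V$ with that property within the stated column budget. The paper sidesteps this by never trying to capture outliers at all --- uniform sampling is oblivious to their norms, and the termination rule dumps them into $\cO$ once they dominate the remaining columns.
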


Thus, the theorem in fact gives a ``column based'' approximation to the space $V$.  Also, if we think of $\eps$ as a constant, the algorithm outputs $O(k \log (n/m) + k/\del)$ columns.  Note that in many cases of interest, we may have $m$ being a small constant factor of $n$.  In such cases, the algorithm roughly outputs only a $O(k/\del)$ dimensional space, while obtaining a $(1+\eps)$ approximation to the objective and violating the number of constraints by a factor $(1+\del)$.

\nvspace
\paragraph{Dependence on $\eps$.}  The drawback in Theorem~\ref{thm:2} is the dependence on $\eps$.  Indeed, the first algorithm, while lossy in many other aspects, has a really good dependence on $\eps$ (which is what makes the algorithms incomparable).  However, we note that even in the noise-free case, obtaining a column-based $(1+\eps)$ approximation to the best rank-$k$ approximation requires $k/\eps$ columns (see~\cite{GuruswamiSinop}).  Thus we cannot hope to get rid of $1/\eps$ entirely.

\nvspace
\paragraph{Technique and extensions.}  It turns out that the key to Theorem~\ref{thm:2} is a modification of a remarkable lemma from~\cite{Silvio}, on finding low-rank approximations under entry-wise $\ell_p$ error by using iterative {\em uniform} sampling.  While the modification (see Lemma~\ref{lem:silvio-lem} and the notes following it) is needed for our main result, it turns out that if we use the original lemma of~\cite{Silvio} in our framework, we obtain the following as a corollary.  Note that the approximation ratio is now $O(k)$ as opposed to $(1+\eps)$ for the case of Frobenius norm.  (A dependence on $k$ turns out to be unavoidable for column-based approximations for $\ell_p$ error even without noise~\cite{Silvio}.)

\begin{theorem}\label{thm:lpnorm}
There is an efficient algorithm that takes as input a matrix $A$ as above, parameters $k, m, \del$, and outputs a decomposition $A = B' + N'$ along with a subspace $V$, such that the following properties hold:  (a) $N'$ has at most $(1+\del) m$ columns, (b) the space $V$ is spanned by $O(k (\log (n/m) + 1/\delta))$ columns of $A$, and (c) the error satisfies 
\[ \err(p, B', V) \le 100 (k+1) \cdot \min_{X \in \R^{d \times k}, Y \in \R^{k \times n}} \norm{B - XY}_p^p , \]
where $\err(p, B', V)$ denotes the minimum $\ell_p^p$ reconstruction error of the columns of $B'$ using  $V$.
\end{theorem}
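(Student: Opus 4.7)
The plan is to run essentially the same iterative-uniform-sampling algorithm used for Theorem~\ref{thm:2}, substituting the original $\ell_p$ lemma of~\cite{Silvio} for the Frobenius modification (Lemma~\ref{lem:silvio-lem}). Concretely, maintain a surviving pool of column indices $C_t$ (initially $\{1,\ldots,n\}$) and a growing subspace $V_t$ (initially $\{0\}$). At iteration $t$: sample $O(k)$ columns of $A$ uniformly at random from $C_t$ and adjoin them to $V_t$ to form $V_{t+1}$; for each $j \in C_t$ compute the entrywise $\ell_p^p$ reconstruction error $\min_{c} \norm{a_j - V_{t+1}\, c}_p^p$; move the columns with largest residuals into $N'$ and call the remainder $C_{t+1}$. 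Run for $T = O(\log(n/m) + 1/\del)$ rounds and output $V = V_T$, $N'$, $B' = A - N'$.

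The per-round guarantee is exactly the uniform-sampling lemma of~\cite{Silvio}: for any matrix $M$, a uniform sample of $O(k)$ columns of $M$ spans a subspace whose entrywise $\ell_p^p$ reconstruction error is within a factor $100(k+1)$ of the best rank-$k$ approximation of $M$, up to an additive residual that can be charged off. Apply this lemma with $M$ taken to be $B$ restricted to $C_t$. If $C_t$ were purely inliers, a single round would already give the claimed approximation; in general, any column of $C_t$ whose residual against $V_{t+1}$ exceeds a threshold calibrated to the lemma is either a genuine outlier (of which there are only $m$) or an inlier whose removal only costs us what we have already ``paid'' inside the bound. Thus discarding these columns is safe, and the bound of $100(k+1) \cdot \min_{X,Y} \norm{B - XY}_p^p$ on the final error of $B'$ against $V$ follows once the iteration terminates.

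The main obstacle, inherited from the proof of Theorem~\ref{thm:2}, is doing the bookkeeping so that the three promised bounds hold simultaneously: $|N'| \le (1+\del)m$, $T = O(\log(n/m) + 1/\del)$, and column count $|V| = O(k(\log(n/m) + 1/\del))$. I would handle (a) by splitting $N'$ into a budget of $m$ ``true outliers'' and a budget of $\del m$ ``inliers discarded for exceeding threshold'', with the second budget metered out as at most $\del m / T$ columns per round. Bound (b) would come from a potential argument on the uncaptured $\ell_p^p$-mass of $C_t$: in each round either the potential drops by a constant factor (contributing the $\log(n/m)$ term once $|C_t|$ shrinks to roughly $m$) or we spend one slot of the $\del m / T$ budget (contributing the $1/\del$ term via $T \cdot (\del m / T) = \del m$). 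Given (a) and (b), the sample size per round times $T$ yields (c), and the error bound (the $100(k+1)$ factor) is immediate from the Silvio lemma applied at the final round. The only reason we cannot push the approximation down to $(1+\eps)$ here is the known $\Omega(k)$ lower bound for column-based $\ell_p$ reconstruction~\cite{Silvio}; all other parts of the analysis are verbatim the same as for Theorem~\ref{thm:2}.
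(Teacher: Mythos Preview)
Your high-level plan---substitute the original $\ell_p$ lemma of~\cite{Silvio} for Lemma~\ref{lem:silvio-lem} inside the framework of Theorem~\ref{thm:2}---is exactly what the paper does. But the concrete algorithm you describe is structurally inverted from Algorithm~\ref{alg:2}, and this inversion breaks the analysis.

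In Algorithm~\ref{alg:2}, each call to \textsc{Select} removes from the working pool the columns with the \emph{smallest} residuals (the set $A^*$ of ``covered'' columns, which eventually constitute $B'$) and recurses on the remainder. The pool therefore shrinks by a constant fraction of the surviving \emph{inliers} each round, until only $(1+\delta)m$ columns remain; those are declared $\mathcal{O}$. The error bound then comes from the grouping argument of Lemma~\ref{lem:all-else}: the sorted inlier errors $z_i$ are partitioned into geometrically shrinking groups $G_1,G_2,\ldots$, and the $j$th batch of covered columns is charged to the average of $G_{j+1}$, giving a telescoping sum.

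Your procedure does the opposite: each round you move the columns with the \emph{largest} residuals into $N'$ and keep the rest as $C_{t+1}$. Since at most $(1+\delta)m$ columns are ever removed, $|C_T|$ stays roughly $n-m$, i.e.\ essentially all of $B$. Now the Silvio lemma applied at round $T$ (or any round) only certifies that a $1/10$ fraction of $C_T$ has reconstruction error $\le 100(k+1)\cdot\theta$; it says nothing about the other $9/10$. So the sentence ``the error bound \ldots\ is immediate from the Silvio lemma applied at the final round'' is false as stated: you never get coverage of all of $B'$. The iterative argument in the paper works precisely because covered columns \emph{leave} the pool, so that in later rounds the still-uncovered inliers form an ever larger fraction of what is sampled, and the lemma can be reapplied to them. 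Your pool never shrinks in this sense.

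Secondarily, the bookkeeping you sketch (a potential on uncaptured $\ell_p^p$-mass, a per-round budget of $\delta m/T$ discarded inliers) is the flavor of Algorithm~\ref{alg:1}, not Algorithm~\ref{alg:2}; the paper's proof of Theorems~\ref{thm:2} and~\ref{thm:lpnorm} uses neither a mass potential nor a per-round outlier budget, but rather the sorted-$z_i$ grouping of Lemma~\ref{lem:all-else}. To repair your argument, reverse the roles: mark as covered the columns whose $\ell_p^p$ reconstruction error against $\spn(R)$ is below the threshold $100(k+1)\theta$, remove \emph{those} from the pool, and terminate when the pool has size $\le (1+\delta)m$. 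Then Lemmas~\ref{lem:inductive} and~\ref{lem:all-else} carry over verbatim with the Silvio lemma replacing Lemma~\ref{lem:silvio-lem} (and $\eps$ replaced by the constant $1/10$), which is exactly the paper's (omitted) proof.
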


\nvspace
\paragraph{Limits of approximation. }  It is natural to ask if there needs to be a trade-off between the dimension of the output space and the slack parameter $\delta$.  Furthermore, we can even ask if we can avoid having a slack altogether.  

By a reduction along the lines of the result of~\cite{HardtMoitra}, the following result is quite easy to show.

\begin{theorem}[Informal version of Theorem~\ref{thm:hardt-moitra}]
Under the small set expansion conjecture with suitable parameters, for any constant $C> 0$, there is no polynomial time algorithm that can obtain a multiplicative factor approximation to the objective, while returning a $Ck$ dimensional subspace, and excluding at most $(1+\delta)m$ outliers,  for small enough constant $\delta$. 
\end{theorem}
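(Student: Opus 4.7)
My plan is to reduce from the Small Set Expansion (SSE) hypothesis, adapting the template of~\cite{HardtMoitra}. Given an SSE instance (a regular graph $G$ with size parameter $\beta$), the task is to distinguish the YES case, in which some vertex set $S$ with $|S| = \beta n$ has edge boundary at most an $\eta$ fraction of its total degree, from the NO case, in which every vertex set of this size has edge boundary at least $1 - \eta$. For sufficiently small constants $\eta$ and $\beta$, this problem is conjectured to be hard.

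From such an instance I would construct a matrix $A \in \R^{d \times n}$ whose columns are indexed by vertices of $G$ via the Hardt--Moitra gadget, so that a non-expanding vertex set produces columns whose span has low dimension. Specifically, the gadget maps each vertex $v$ to a signed combination of edge-basis vectors along its incident edges, so that the span of the columns indexed by a vertex set $T$ depends linearly on the edges internal to $T$. Choosing parameters so that in the YES case the $(1-\beta)n$ columns outside $S$ lie in a $k$-dimensional subspace $V^*$, I set $m = \beta n$ and observe that the optimal objective is zero (or, after a tiny generic perturbation of the inliers, any desired small value $\tau > 0$).

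In the NO case, I would argue that for any choice of $(1+\del)m$ columns to remove and any $Ck$-dimensional subspace $V$, the remaining columns cannot all lie in $V$: the vertex set whose gadget columns are captured by $V$ must have edge boundary bounded away from $1$ (by a rigidity argument on the gadget), while the NO hypothesis then caps its size strictly below $(1 - (1+\del)\beta)n$, provided $C$, $\del$, $\eta$ are chosen suitably small. The uncaptured columns contribute a strictly positive Frobenius shortfall, yielding a multiplicative gap between YES (objective $\le \tau$) and NO (objective $\ge$ constant). No multiplicative-approximation algorithm respecting the stated bi-criteria slack could bridge this gap, contradicting SSE.

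The main obstacle is the rigidity step for output dimension $Ck$ rather than $k$. Hardt--Moitra's analysis targets exact $k$-subspace recovery; extending it requires showing that a $Ck$-dimensional subspace containing many gadget columns forces the underlying vertex set to be quantitatively non-expanding, with the quantitative loss still prohibited by SSE after absorbing the constants $C$ and $\del$. This amounts to a linear-algebraic bookkeeping step comparing the rank of the gadget-column span to the dimension of the internal-edge subspace of the associated vertex set. Once that is in place, the scaling and perturbation arguments that convert the combinatorial gap into a multiplicative Frobenius gap are routine.
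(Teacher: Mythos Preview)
Your reduction reverses the Hardt--Moitra construction in a way that breaks the YES case. In the paper, the columns are indexed by \emph{edges}: edge $\{i,j\}$ becomes $\e_i + \e_j \in \R^{|V|}$, and the number of outliers is set to $m = \tfrac{n\Delta}{2}(1-\delta/2)$, so \emph{nearly all} columns are outliers. The inliers are the (few) edges touching the small non-expanding set $S$, and these lie in the $|S|$-dimensional coordinate span of $S$, giving $k = \delta n$ and optimum error zero. Your version instead takes columns indexed by vertices (each mapped to a signed incidence vector in $\R^{|E|}$), sets $m = \beta n$, and declares the $(1-\beta)n$ vertices of $V\setminus S$ to be the inliers. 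But those columns have no reason to lie in a low-dimensional space: their span sits inside the span of all edges touching $V\setminus S$, which is essentially all of $\R^{|E|}$ whether or not $S$ expands. The assertion that the span ``depends linearly on the edges internal to $T$'' is not made precise, and I do not see a sign choice that forces the large side to collapse. Non-expansion of the small set constrains the edges \emph{touching it}, not the complement; that is exactly why the paper's edge-as-column, large-$m$ setup is the one that works.

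As for the $Ck$-versus-$k$ step you flag as the main obstacle, the paper dispatches it without any delicate rigidity argument. Lemma~\ref{lem:edge-subspace} says that a set of edge-vectors spanning a $d'$-dimensional space touches at most $2d'$ vertices, so a $Ck$-dimensional space containing all inlier edges forces those edges onto $\le 2C\delta n$ vertices; a random $\delta n$-subset of that vertex set then retains in expectation a $\Theta(1/C^2)$ fraction of the edges, contradicting $(1-\eps)$-expansion in the NO case once $C < 1/(6\sqrt{\eps})$.
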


This rules out the possibility of finding a $Ck$-dimensional subspace for arbitrarily small $\delta$.  A very similar reduction, but from the smallest $r$-edge subgraph problem (see Section~\ref{sec:hardness:2}) implies that if we wish to have $\delta = 0$, then the dimension of the subspace output must be at least $k \cdot n^{\Omega(1)}$.  See Corollary~\ref{thm:smes} for the formal statement. We remark once more that these hardness results indicate that ``pure'' approximations (as can be obtained for clustering) are impossible for PCA.

\paragraph{Open problems, directions.}  While bi-criteria guarantees are unavoidable in the worst case, it is interesting to see if simple iterative algorithms like the ones we proposed can be shown to {\em recover} the $k$-PCA subspace of the inliers under appropriate assumptions.  A starting point would be assumptions similar to the ones of Donoho et al.  Next, the dependence on $\delta, \epsilon$ in our sampling based algorithm are possibly sub-optimal.  It would be interesting to see if more sophisticated algorithms can give better guarantees.  More broadly, it would be interesting to show more guarantees for heuristics such as RANSAC and algorithms inspired by them for other problems involving outliers.  

\section{Notation and preliminaries}
We start with some basic matrix notation we use.   Let $A$ be a $d \times n$ matrix.  Throughout the paper, we write $A_k$ to refer to the best rank-$k$ approximation of $A$ (thus it is also a $d \times n$ matrix).  For a subset $T$ of the column indices ($T \subseteq [n]$), we denote by $A\subb{T}$ the $d \times |T|$ sub-matrix of $A$ formed by the columns indexed by $T$.

Next, given an integer $k$, we denote by $\err_k (A)$ the error in the best rank-$k$ approximation of $A$.  Specifically, $\err_k(A) = \fnorm{A- A_k}$.  Also, for a set of vectors $W$, their linear span is denoted $\spn(W)$.  Finally, the projection matrix orthogonal to the space orthogonal to $\spn(W)$ will be denoted by  $\Pi_W^\perp$.   So also, for a {\em subspace} $W$, we abuse notation slightly and denote by $\Pi_W^\perp$ the projection matrix to the space orthogonal to $W$.

\nvspace
\paragraph{Guessing the optimum.}  In all our algorithms, we assume that we have a guess for the optimum error (error in the low-rank approximation of $B$), up to a multiplicative factor of $(1+\eps)$.   A fairly straightforward argument shows that we can always come up with a polynomial number (in the input complexity) of guesses, one of which is accurate.  This is shown in Appendix~\ref{sec:opt-guess}.

\section{Iterative SVD}
We now present our first algorithm.  It involves repeatedly computing the best low rank approximation, while potentially throwing away some points as outliers.  This will establish Theorem~\ref{thm:adaptive}.  
Let us start with an informal description of the algorithm.

\nvspace
\paragraph{Algorithm outline.} At each step $j$, we have a subset $S_j$ of the initial column vectors, and a subspace $V_j$. Let $N = |S_j|$ and let $u_1, u_2, \dots, u_N$ denote the projections of the columns in $S_j$, orthogonal to the space $V_j$. The aim is to either (a) find a new subspace $V_{j+1}$ that captures a significantly larger fraction of the total mass than $S_j$, or (b) remove a set of at most $m$ columns from $S_j$ and mark them as outliers. In either case, we show that the total ``uncaptured'' mass remaining drops by a constant factor.  This allows us to bound the number of iterations, thus giving the guarantees of Theorem~\ref{thm:adaptive}.

The algorithm is formally stated below (Algorithm~\ref{alg:1}).  As discussed in Section~\ref{sec:opt-guess}, it assumes that we have a guess $\xi$ for the optimum error. 

\begin{algorithm}[!h]
\caption{Iterative SVD}
\label{alg:1}

 {\bf Input:} Matrix $A \in \R^{d \times n}$, guess $\xi$ for the optimum error, parameter $m$ (bound on \# outliers), and accuracy parameter $\eps$.\\
{\bf Output:} A subspace $V$, and a set $S$ of inliers. 

\begin{algorithmic}[1]
%
%
%
%
\State Initialize $ V_0 = \emptyset$, $S_0 = \text{cols}(A)$, and $j = 0$.
\While{total squared projection of $S_j$ onto the space orthogonal to $V_j$ is $\ge (1+\eps) \xi$} 
	\State Let $u_1, u_2, \dots, u_N$ be the projection of the columns in $S_j$ orthogonal to $V_j$.
	\State Define $\mu_j := \sum_i \norm{u_i}^2$.
	\State Let $T_j$ be the $m$ largest (by length) vectors among $\{u_i\}_{i=1}^N$.
	\If{($\sum_{u \in T_j} \norm{u}^2 \geq \frac{1}{2} (\mu_j -\xi)$ )} 
	\State $S_{j+1} := S_j \setminus T_j$, and $V_{j+1} = V_j$.
	\Else
	\State Let $V_{j+1}$ be the rank-$((j+1)k)$ SVD for $S_j$.
	\State Set $S_{j+1} = S_j$.
	\EndIf
	\State $j \leftarrow j+1$
\EndWhile
\State \Return $V_j, S_j$
\end{algorithmic}
\end{algorithm}

\begin{lemma}\label{lem:adapt-sample}
In every iteration of the algorithm, the total mass of the inliers reduces significantly.  More precisely, we have $\mu_{j+1} \le (\mu_j + \xi)/2$.   
\end{lemma}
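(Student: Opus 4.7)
The plan is to split the analysis into the two branches of the conditional, since each accounts for the drop in $\mu_j$ in a different way. In both cases I will use that at most $m$ columns of $S_j$ are true outliers, since the original problem allows at most $m$ outlier columns and the algorithm never reinstates discarded columns.

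The ``if'' branch is essentially a one-line computation. Since $V_{j+1}=V_j$ and $S_{j+1}=S_j\setminus T_j$, the new mass is
\[ \mu_{j+1} \;=\; \mu_j - \sum_{u\in T_j}\norm{u}^2 \;\le\; \mu_j - \tfrac{1}{2}(\mu_j-\xi) \;=\; \tfrac{1}{2}(\mu_j+\xi), \]
where the inequality is exactly the condition that triggered the branch.

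The ``else'' branch is the interesting case, and here I would bring in the optimal rank-$k$ subspace $V^\ast$ for the inliers, which by assumption satisfies $\fnorm{\Pi_{V^\ast}^\perp B}\le\xi$. The key observation is that $V':=V_j+V^\ast$ has dimension at most $jk+k=(j+1)k$, so it is a valid competitor against which the rank-$(j+1)k$ SVD minimizer $V_{j+1}$ must compare. I will upper bound $\sum_i\norm{\Pi_{V'}^\perp x_i}^2$ by splitting the columns of $S_j$ into inliers and outliers. For the inlier terms, $V'\supseteq V^\ast$ yields $\norm{\Pi_{V'}^\perp x_i}^2\le\norm{\Pi_{V^\ast}^\perp x_i}^2$, and summing over the inliers in $S_j$ is bounded above by $\fnorm{\Pi_{V^\ast}^\perp B}\le\xi$. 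For the outlier terms, there are at most $m$ of them, and each is at most $\norm{\Pi_{V_j}^\perp x_i}^2=\norm{u_i}^2$; since $T_j$ consists of the top-$m$ vectors by length among the $u_i$'s, this outlier contribution is bounded by $\sum_{u\in T_j}\norm{u}^2$, which is strictly less than $\tfrac{1}{2}(\mu_j-\xi)$ precisely because we are in the else branch. Adding the two pieces gives $\sum_i\norm{\Pi_{V'}^\perp x_i}^2 < \tfrac{1}{2}(\mu_j+\xi)$, and SVD optimality for $V_{j+1}$ then yields $\mu_{j+1}\le\sum_i\norm{\Pi_{V'}^\perp x_i}^2$, which is the required bound.

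The one slightly delicate point I would double-check is the dimension count $\dim(V_j+V^\ast)\le(j+1)k$, which is exactly what makes $V'$ a legitimate competitor against $V_{j+1}$ in the SVD optimality step. This is presumably the reason the algorithm enlarges the subspace by $k$ fresh dimensions per iteration rather than by $1$: with a smaller SVD budget the competitor argument would break. Modulo this observation, the proof is a straightforward two-case bookkeeping argument.
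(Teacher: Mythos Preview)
Your proposal is correct and follows essentially the same argument as the paper: the same two-case split, the same competitor subspace $V'=V_j+V^\ast$ in the ``else'' branch, the same inlier/outlier decomposition of the residual, and the same appeal to SVD optimality against $V'$. Your remark about the dimension count $\dim(V_j+V^\ast)\le (j{+}1)k$ being the reason the algorithm jumps by $k$ dimensions each round is exactly the point the paper is making (and the footnote in the paper further explains why the more ``natural'' incremental update fails).
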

\begin{proof}
We consider both the cases in the algorithm.

\nvspace
\paragraph{Case 1.} $\sum_{u \in T_j} \norm{u}^2 \geq \frac{1}{2} (\mu_j -\xi)$. 

In this case, we remove $T_j$ from the set of inliers, and thus
\[ \mu_{j+1} = \mu_j - \sum_{u \in T_j} \norm{u}^2 \le \frac{\mu_j + \xi}{2}. \] 

\nvspace
\paragraph{Case 2.}  $\sum_{u \in T_j} \norm{u}^2 < \frac{1}{2} (\mu_j -\xi)$. 

Let us denote the set of inlier columns by $\cI$.  Let us argue about the error in the best rank-$(j+1)k$ approximation of $S_j$.  To do so, we consider the space $V' = V_{j} + V^*$, where $V^*$ is the rank-$k$ SVD space of $\cI$ (this is the optimal subspace that we are after).   Now, let us consider the projection of the columns of $S_j$ orthogonal to $V'$.  For the columns $S_j \cap \cI$, the total error has to be $\le \xi$, because by assumption, projecting $\cI$ orthogonal to $V^*$ has this error.  Next, the projection of $S_j \setminus \cI$ orthogonal to $V'$ must be smaller than (or equal to) the projections orthogonal to $V_j$.  As $|S_j \setminus \cI|\le m$ (there are at most $m$ outliers), their projections orthogonal to $V_j$ can be bounded by $\sum_{u \in T_j} \norm{u}^2$ (as $T_j$ contained the $m$ largest vectors by length).  This allows us to bound the total error by
\[  \xi + \frac{\mu_j - \xi}{2} = \frac{\mu_j + \xi}{2}. \]
The best $(j+1)k$ dimensional subspace will result in an error only better than the above, and thus the lemma follows.\footnote{We note that a more ``natural'' algorithm is to add the top-$k$ SVD space of the $u_i$ vectors to the current space in step 9.  This  is closer to adaptive sampling paradigm (see~\cite{DeshpandeVempala}), but it runs into the issue that it is not clear the projection of the $u_i$ corresponding to inliers reduces to $\xi$.  This stems from the fact that for two spaces $V, W$, $\Pi_V u$ could be smaller in length than $\Pi_V (\Pi_W u)$. }
\end{proof}

The lemma above immediately implies that (assuming that the guess of $\xi$ is an upper bound on the optimum), the algorithm runs for at most $\log (\fnorm{A}/\eps \xi)$ iterations.  This is because the gap between $\mu_j$ and $\xi$ drops by a factor at least 2 in each iteration.  Thus, by searching over all the possible $\xi$ and by the comment below, Theorem~\ref{thm:adaptive} follows.

\nvspace
\paragraph{Validating the guess	of $\xi$}  We note that the above algorithm works whenever $\xi$ is an upper bound on the optimum.  If it is lower than the optimum, then in one of the iterations, we may not see a drop in $\mu_{j+1}$ as guaranteed by Lemma~\ref{lem:adapt-sample}.  Thus, we can test for this as the algorithm proceeds and output FAIL if we do not see a drop. 

\section{Iterative uniform sampling}
We next present our main algorithm for the PCA with outliers problem.  We will start with a sampling lemma that is at the heart of the algorithm.  This lemma applies to the case when there are no outlier columns.  As mentioned earlier, the lemma is a variant of a lemma from~\cite{Silvio}, which applies to $\ell_p$ norms and has additional factors of $k$.  In Section~\ref{sec:sample-outliers}, we show how the lemma can be used even in the presence of outliers, thus establishing Theorem~\ref{thm:2}.

\subsection{Sampling without outliers}
The first lemma is simply about low rank approximation via columns (without any outliers). 

\begin{lemma}\label{lem:silvio-lem}
Let $A \in \R^{d \times n}$, and let $\eps \in (0, 1]$ be any parameter. Let $S$ be a uniformly random subset of $[n]$ of size $s$, where $s\ge 4k/\eps^2$.  Then, with probability at least $\eps^2/8$, there exist a set of $\eps^2 n/8$ columns of $A$, whose projection orthogonal to the column span of $A\subb{S}$ is upper bounded by $(1+\eps) \norm{A- A_k}_F^2 / n$.
\end{lemma}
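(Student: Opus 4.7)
The plan is to prove the lemma in two stages. First I would establish an in-expectation bound on the total squared projection error $\fnorm{\Pi_{V_S}^\perp A}$, and then convert this into the stated probabilistic existence claim via Markov's inequality together with a column-level counting argument. Let $\xi := \fnorm{A - A_k}$, write $A = A_k + E$, and let $V^*$ denote the (at most $k$-dimensional) column span of $A_k$. Factor $A_k = XY$ with $X \in \R^{d \times k}$ having orthonormal columns spanning $V^*$ and $Y \in \R^{k \times n}$, so that each column decomposes orthogonally as $a_j = X y_j + e_j$ with $e_j \in (V^*)^\perp$ and $\sum_j \norm{e_j}^2 = \xi$.

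The heart of the proof is showing
\[
\E_S\!\left[\fnorm{\Pi_{V_S}^\perp A}\right] \;\le\; (1+\eps)(1-\eps^2/8)^2 \cdot \xi
\]
for $s \ge 4k/\eps^2$. My approach would be to exhibit, for each column $j$, a coefficient vector $w_j \in \R^s$ such that $A\subb{S} w_j$ is close to $a_j$, and then sum the per-column contributions. A natural choice is $w_j := Y\subb{S}^+ y_j$, for which $(A_k)\subb{S} w_j = X\, Y\subb{S} Y\subb{S}^+ y_j = X y_j = (A_k)_j$ whenever $Y\subb{S}$ has full row rank. Because the reconstruction error $a_j - A\subb{S} w_j = (e_j - E\subb{S} w_j) + ((A_k)_j - (A_k)\subb{S} w_j)$ splits orthogonally across $(V^*)^\perp$ and $V^*$, its squared norm collapses to $\norm{e_j - E\subb{S} w_j}^2$. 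Summing and taking expectation over $S$ yields $\xi + \E_S \sum_j \norm{E\subb{S} w_j}^2 - 2\,\E_S \sum_j \iprod{e_j,\, E\subb{S} w_j}$; the cross term is controlled by Cauchy--Schwarz, and the dominant correction is the noise term $\E_S \sum_j \norm{E\subb{S} w_j}^2$, which I would bound by $O(\eps^2) \cdot \xi$ using matrix-concentration control of $\sigma_{\min}^2(Y\subb{S}) = \Omega((s/n)\, \sigma_k^2(A_k))$ (yielding $\norm{w_j}^2 = O(n/s) \cdot \norm{y_j}^2 / \sigma_k^2$) combined with the identity $\E_S \fnorm{E\subb{S}} = (s/n)\, \xi$.

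Granted the expectation bound, Markov's inequality gives
\[
\Pr_S\!\left[\fnorm{\Pi_{V_S}^\perp A} \le (1+\eps)(1-\eps^2/8)\, \xi\right] \;\ge\; \eps^2/8,
\]
and on this event a second (per-column) Markov argument shows that fewer than $n(1-\eps^2/8)$ column indices can violate the per-column bound $\norm{\Pi_{V_S}^\perp a_j}^2 \le (1+\eps)\,\xi/n$. Hence at least $\eps^2 n/8$ columns satisfy the per-column bound, giving the set of columns whose existence is asserted by the lemma.

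The main obstacle is the expectation bound of the second paragraph. Because $w_j$ depends on the random $S$ through $Y\subb{S}^+$, the noise term $\E_S \fnorm{E\subb{S} W_S}$ couples the randomness of $S$ with that of $W_S$, and a naive split via a spectral norm bound loses a factor of $k$ rather than $\eps^2$ (this is essentially what happens in the $\ell_p$ version of~\cite{Silvio}). The Frobenius analysis has to carefully combine the concentration of $Y\subb{S} Y\subb{S}^T$ around $(s/n)\, YY^T$ with the inter-column averaging in $\sum_j \iprod{e_j, E\subb{S} w_j}$ to avoid the $k$-loss and recover the $(1+\eps)$-type approximation.
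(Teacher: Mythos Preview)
Your approach has a genuine gap: the central expectation bound
\[
\E_S\!\left[\fnorm{\Pi_{V_S}^\perp A}\right] \;\le\; (1+\eps)(1-\eps^2/8)^2 \,\xi
\]
is simply false for uniform sampling, so the argument cannot be completed along these lines. A concrete counterexample: take $k=1$ and let $A$ have first column $M e_1$ (with $M$ huge) and remaining columns $e_2,\dots,e_n$. Then $\xi = \fnorm{A-A_1} = n-1$, but whenever $1\notin S$ we have $\fnorm{\Pi_{V_S}^\perp A}\ge M^2$, and this event has probability $1-s/n$. Hence the expectation is at least $(1-s/n)M^2$, which can be made arbitrarily larger than $(1+\eps)\xi$. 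The failure point in your sketch is the step invoking matrix concentration for $\sigma_{\min}(Y\subb{S})$: uniform column sampling gives such control only under leverage-score/incoherence assumptions on $Y$, which are absent here. The paper even flags this in the remark following the lemma: uniform sampling does \emph{not} give small total error in expectation; the surprise is that it nonetheless handles a constant fraction of the columns.

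The paper's proof avoids any total-error expectation bound entirely. It samples a set $T$ of size $s+1$ and views $S$ as $T$ minus a random element $u$. Two low-probability (each $\Omega(\eps)$) events are combined: (i) $T$ is ``good'' in that $\err_k(A\subb{T})\le (1+\eps)|T|\theta$ (Markov on the rank-$k$ residual restricted to $T$), and (ii) the removed column $u$ satisfies $\norm{\Pi_S^\perp u}^2 \le (1+\eps)\,\err_k(A\subb{T})/|T|$. Step~(ii) is obtained by applying the Guruswami--Sinop column-subset theorem to $A\subb{T}$ to produce a subset $W\subset T$ of size $k/\eps$ with $(1+\eps)$-approximation, and then arguing that with probability $\ge \eps/4$ the random $u$ both lies outside $W$ (so $W\subset S$) and has small residual with respect to $W$. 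A bipartite double-counting argument between $(s{+}1)$-tuples $T$ and $s$-tuples $S$ then converts the $\Omega(\eps^2)$ edge density into the conclusion that an $\eps^2/8$ fraction of sets $S$ each have $\eps^2 n/8$ well-covered columns. The key external ingredient you are missing is the Guruswami--Sinop theorem; it is what supplies the $(1+\eps)$ factor without any expectation bound on the full projection error.
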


\nvspace
\paragraph{Note.}  The lemma is quite surprising: for say $\eps = 1$, it says that a {\em uniformly random} subset of $4k$ columns covers a constant fraction of the columns up to a constant times the $k$-SVD error.  Such a guarantee is not even clear for norm-based sampling (see~\cite{FriezeKannan}).  So, while uniform sampling does not necessarily give a low {\em total error} in expectation, it ends up giving small error for a constant fraction of columns.  

We use the same rough outline as the proof of~\cite{Silvio}.  There are, however, two main differences.  First, as we need a column-based $(1+\eps)$ guarantee on rank-$k$ approximation, we appeal to the results of~\cite{GuruswamiSinop}, instead of a weaker $O(k)$ bound used in~\cite{Silvio} for the $\ell_p$ norm.  Second (and more significant), their proof uses a simple union bound over failure probabilities.  This does not suffice for a $(1+\eps)$ approximation, as two of the events are low probability (roughly $\eps$).  We observe that these events are effectively independent, and so we obtain a constant probability of success. .

\begin{proof}
Let us denote $s = 4k/\eps^2$.  Let $T$ be a random subset of $[n]$ of size $s+1$.  We think of sampling $S$ as first sampling $T$ and then removing a random element $u \in T$. For convenience, let us write
\[ \theta = \frac{\fnorm{A - A_k}}{n}.\]

First, let us call a subset $T$ {\em good} if $A\subb{T}$ has a small error rank-$k$ approximation. Concretely, $T$ is said to be good if $\err_k (A\subb{T}) \le (1+\eps) \cdot |T| \theta$. Now, if $\Pi_k$ is the projection matrix onto the space orthogonal to the best rank $k$ approximation for the full matrix $A$, we have $\E [\sum_{u \in T} \norm{\Pi_k u}^2 ] = |T| \theta$ (where the expectation is over the choice of $T$).  Thus, by Markov's inequality, we have
\begin{equation}\label{eq:pr-good}
\Pr \left[ \sum_{u \in T} \norm{\Pi_k u}^2 > (1+\eps) |T| \theta \right] \le \frac{1}{1+\eps} \le 1 - \frac{\eps}{2}.
\end{equation}

Thus, as the best rank-$k$ approximation for $A\subb{T}$ can only have a smaller error, we have that $T$ is good with probability at least $\eps/2$. Next, we show the following claim.

\begin{claim}\label{claim:1}
For any $T$ of size $\ge 4k/\eps^2$, if we form $S$ by randomly removing a $u \in T$, then with probability at least $\eps/2$ (over the choice of $u$), we have 
\begin{equation}\label{eq:claim1}
\norm{\Pi_S^\perp u}^2 \le (1+\eps) \cdot \frac{err_k(A\subb{T})}{|T|}.
\end{equation}
\end{claim}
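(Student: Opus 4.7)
The plan is to invoke a column-based $(1+\eps)$ rank-$k$ approximation result (Guruswami--Sinop) applied to $A_{(T)}$, producing a subset $S^* \subseteq T$ of size $O(k/\eps)$ whose column span approximates $A_{(T)}$ to within $(1+O(\eps))\err_k(A_{(T)})$. Because the hypothesis $|T| \ge 4k/\eps^2$ forces $|S^*|/|T| = O(\eps)$, a uniformly random $u \in T$ lies outside $S^*$ with probability $1 - O(\eps)$; and when $u \notin S^*$ we have $S^* \subseteq S = T \setminus \{u\}$, so by monotonicity of projection, $\norm{\Pi_S^\perp u}^2 \le \norm{\Pi_{S^*}^\perp u}^2$. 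Meanwhile, a Markov argument on the column-wise projection sum guaranteed by Guruswami--Sinop shows that $\norm{\Pi_{S^*}^\perp u}^2$ lies below the target threshold $(1+\eps)\err_k(A_{(T)})/|T|$ for a large fraction of $u$'s.

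Concretely, I would first apply Guruswami--Sinop to $A_{(T)}$ with an accuracy parameter $\eps'$ chosen to be a sufficiently small constant fraction of $\eps$, to obtain
\[ \sum_{v \in T} \norm{\Pi_{S^*}^\perp v}^2 \;\le\; (1+\eps') \err_k(A_{(T)}), \qquad |S^*| \le k/\eps'. \]
Then I would combine the two sources of randomness: the ``$u \in S^*$'' event contributes failure probability at most $|S^*|/|T| = O(\eps)$, and Markov's inequality bounds the fraction of $u \in T$ with $\norm{\Pi_{S^*}^\perp u}^2 > (1+\eps)\err_k(A_{(T)})/|T|$ by $(1+\eps')/(1+\eps)$. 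A union bound then gives the desired event with probability at least $(\eps-\eps')/(1+\eps) - O(\eps)$; choosing $\eps'$ sufficiently smaller than $\eps$ (and, if needed, enlarging the constant $4$ in the hypothesis on $|T|$) makes this at least $\eps/2$. Transferring the bound from $\Pi_{S^*}^\perp$ to $\Pi_S^\perp$ via the monotonicity inequality above closes the argument.

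The main obstacle is the delicate bookkeeping of constants. The ``$u \in S^*$'' event and the Markov slack each spend a $\Theta(\eps)$ slice of the $\eps/2$ budget, and the tension between them is real: Markov becomes tighter as $\eps'$ shrinks, but $|S^*| = k/\eps'$ grows accordingly, which in turn grows the ``$u \in S^*$'' loss. A subtlety worth flagging is that for $u \in S^*$ one trivially has $\norm{\Pi_{S^*}^\perp u}^2 = 0$, but this does \emph{not} transfer to a bound on $\norm{\Pi_S^\perp u}^2$, since $S$ drops $u$ and the span of $A_{(S)}$ can lack a direction present in $\spn(A_{(S^*)})$; accordingly, those $u$'s must be excluded from the good event rather than absorbed for free. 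This is precisely the ``effectively independent'' observation the authors invoke to avoid losing the $(1+\eps)$ factor when combining \eqref{eq:pr-good} with this claim in the proof of Lemma~\ref{lem:silvio-lem}.
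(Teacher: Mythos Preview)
Your proposal is correct and follows essentially the same approach as the paper: apply Guruswami--Sinop to $A\subb{T}$ to obtain a small column subset $W$ (your $S^*$), bound $\Pr[u \in W]$ using $|W|/|T| \le \eps/4$, apply Markov to the average of $\norm{\Pi_W^\perp u}^2$, and combine via a union bound together with the monotonicity $\norm{\Pi_S^\perp u} \le \norm{\Pi_W^\perp u}$ when $W \subseteq S$. The paper simply sets your $\eps'$ equal to $\eps$ rather than introducing a separate parameter, and is itself loose with the constants (its proof actually delivers probability $\ge \eps/4$ and approximation factor $(1+\eps)^2$, so your caveat about needing to adjust constants is warranted and applies equally to the original).
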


To show this claim, we first appeal to the existence of good column-based low-rank approximations. Specifically, Guruswami and Sinop showed the following.
\begin{theorem}[Guruswami, Sinop~\cite{GuruswamiSinop}]\label{thm:guruswami}
For any matrix $B$ and parameter $k$, there exist a subset $W$ of at most $k/\eps$ columns of $B$ with the property that
\[ \norm{\Pi_W^\perp B}_F^2 \le (1+\eps) \cdot \err_k (B).\]
\end{theorem}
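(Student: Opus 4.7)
The plan is to prove Theorem~\ref{thm:guruswami} by the probabilistic method, using a two-phase sampling scheme: a volume-sampling seed, followed by iterative adaptive (length-squared) sampling, and arguing that the expected reconstruction error satisfies a contraction that drives it down to $(1+\eps)\err_k(B)$ within a budget of $O(k/\eps)$ columns. Any single realization achieving the expected bound then supplies the desired subset $W$.

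\textbf{Phase one (volume sampling).}  Write $B = U\Sigma V^\top$ and let $B_k = U_k \Sigma_k V_k^\top$ denote the best rank-$k$ approximation. Draw an initial set $W_0 \subseteq [n]$ of exactly $k$ columns from the \emph{volume sampling} distribution, $\Pr[W_0 = S] \propto \det(B_S^\top B_S)$. A Cauchy--Binet calculation, standard in this area, yields
\[ \E\bigl[\fnorm{\Pi_{W_0}^\perp B}\bigr] \le (k+1)\, \err_k(B), \]
so already after $k$ columns the expected reconstruction is a $(k+1)$-approximation. This serves purely as a warm start and gets us into a regime in which length-squared sampling can be analyzed cleanly.

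\textbf{Phase two (adaptive refinement).}  Starting from $W_0$, iteratively append columns: at step $t$, sample $j$ with probability proportional to $\norm{\Pi_{W_t}^\perp B^{(j)}}^2$. The key lemma is that for any current set $W_t$, this sampling distribution is precisely the one for which a single new column, in expectation, reduces the excess error $\fnorm{\Pi_{W_t}^\perp B} - \err_k(B)$ by a constant multiplicative factor $1 - \Omega(1/k)$. The proof of this contraction proceeds by (i) comparing the sampled projection to the best rank-$k$ approximation of the residual matrix $\Pi_{W_t}^\perp B$, and (ii) averaging the sampled reduction across the $k$ leading singular directions of that residual. Iterating the contraction $O(k/\eps)$ times drives the expected excess below $\eps \cdot \err_k(B)$, and the total sample size is at most $k + O(k/\eps) = O(k/\eps)$; a trivial reparametrization of $\eps$ turns this into the stated $k/\eps$ bound.

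\textbf{Main obstacle.}  The technical heart is the sharp per-step contraction of $1 - \Omega(1/k)$: naive Markov-style arguments applied to length-squared sampling only give $1 - 1/\poly(k,\eps)$, which would force an inferior column count. Obtaining the right constant requires exploiting the low dimensionality of the ``regret direction'' — the fact that all of the remaining reducible error is captured by a rank-$k$ subspace of the residual — via an averaging or expectation argument over the top-$k$ singular directions of $\Pi_{W_t}^\perp B$. A secondary subtlety is the passage from expectation to existence: since the theorem asserts existence of a deterministic $W$, I would invoke the standard probabilistic-method observation that at least one realization of the randomized procedure achieves the expected bound. Finally, combining the $k$ volume-sampled seeds with the adaptive samples into a single budget of $k/\eps$ is purely a bookkeeping step.
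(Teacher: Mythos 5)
First, note that the paper does not prove this statement at all: it is imported verbatim as a black box from Guruswami and Sinop, so there is no internal proof to compare against. Your proposal must therefore stand on its own, and unfortunately its central lemma is false. You claim that a single adaptively (length-squared) sampled column contracts the excess error $\fnorm{\Pi_{W_t}^\perp B} - \err_k(B)$ by a factor $1-\Omega(1/k)$ in expectation. If that were true, then starting from the $(k+1)$-approximation supplied by your volume-sampled seed, $O(k\log(k/\eps))$ columns would already drive the excess below $\eps\cdot\err_k(B)$. For small $\eps$ this beats $k/\eps$, contradicting the known lower bound that column-based $(1+\eps)$ rank-$k$ approximation requires $\Omega(k/\eps)$ columns --- a lower bound this very paper invokes (see the ``Dependence on $\eps$'' discussion in Section~\ref{sec:results}, citing the same Guruswami--Sinop paper). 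The instance witnessing the lower bound is the standard one with columns $e_0+\eps' e_i$ and $k=1$: there, your claimed constant-factor per-column contraction would yield $(1+\eps)$ with $O(\log(1/\eps))$ columns, which is impossible. What adaptive sampling actually gives (Deshpande--Vempala) is the batch bound $\E[\fnorm{\Pi_{W\cup S}^\perp B}] \le \err_k(B) + \tfrac{k}{s}\fnorm{\Pi_W^\perp B}$ for $s$ new columns, where the additive term is $k/s$ times the previous \emph{total} error, not the excess; iterating this yields $O(k/\eps + k\log k)$ columns, which is weaker than the stated theorem. (Your arithmetic is also internally inconsistent: $O(k/\eps)$ iterations of a $(1-c/k)$ contraction starting from excess $\approx k\,\err_k(B)$ leave a residual of order $e^{-c/\eps}k\,\err_k(B)$, which is not below $\eps\,\err_k(B)$ for constant $\eps$.)

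The correct route, and the one Guruswami and Sinop take, is a single round of \emph{generalized} volume sampling: draw a subset $S$ of size $r\ge k$ with probability proportional to $\det(B_S^\top B_S)$. A Cauchy--Binet computation expresses $\E[\fnorm{B-\Pi_S B}]$ as a ratio of elementary symmetric polynomials in the squared singular values, which is bounded by $\tfrac{r+1}{r+1-k}\,\err_k(B)$. Choosing $r = k/\eps + k - 1$ makes this factor $1+\eps$, and the probabilistic method (or their deterministic selection procedure) gives the subset $W$. So the volume-sampling idea in your Phase one is the right one --- but it must be run at subset size $\Theta(k/\eps)$ rather than $k$, and the adaptive refinement phase should be discarded.
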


We apply the result to the matrix $A\subb{T}$, and let $W$ denote the set of columns guaranteed by the theorem. Now, the probability that a random column $u$ of $A \subb{T}$ belongs to $W$ is at most $|W|/|T|$, which is at most $\eps/2$, by our choice of $s$. Thus, the probability that $S \supset W$ is at least $1 - \eps/4$.

Next, we also have that for a random column $u$ of $A\subb{T}$, the expected value 
\[ \E[ \norm{\Pi_W^\perp u}^2 ] = \frac{\fnorm{\Pi_W^\perp A\subb{T}}}{|T|} \le (1+\eps) ~\frac{\err_k (A\subb{T})}{|T|}. \]
Once again, by Markov's inequality, the probability that $\norm{\Pi_W^\perp u}^2$ is bounded by $(1+\eps)$ times the RHS is at least $\eps/2$. Thus by a union bound, with probability at least $\eps/4$, we have this condition, as well as the event $S \supset W$. In this case, we clearly have $\norm{\Pi_S^\perp u} \le \norm{\Pi_W^\perp u}$, and this completes the proof of Claim~\ref{claim:1}.

Now, consider the bipartite graph in which the left side consists of all $(s+1)$-tuples of $[n]$ and the right side consists of all $s$-tuples of $[n]$.  We place an edge between $T$ and $S$ if (a) $T$ is good, and (b) $u = T \setminus S$ satisfies Eq.~\eqref{eq:claim1}. The claim above, together with Eq.~\eqref{eq:pr-good} (which lower bounds the probability of $T$ being good), imply that the total number of edges is at least
\[ \frac{\eps}{2} \binom{n}{s+1} \frac{\eps}{4} (s+1) = \frac{\eps^2}{8} (n-s) \binom{n}{s}. \]
Note that $n-s$ is the maximum number of edges that could be incident to a vertex on the right.  Thus, we conclude that at least an $\eps^2/8$ fraction of the vertices on the right have degree at least $\eps^2 (n-s)/8$.  Since an edge implies that (a) $T$ is good and (b) Eq.~\eqref{eq:claim1} holds, the conclusion of the lemma follows.
\end{proof}

\subsection{Incorporating outliers}\label{sec:sample-outliers}
Next, suppose the set of columns contains (at most) $m$ outliers. We then consider Algorithm~\ref{alg:2}.  The analysis will use the value of the average error per column in the optimum solution, namely $\theta :=  \fnorm{B - B_k}/(n-m)$.

\begin{algorithm}[!h]
\caption{Iterative sampling with outliers}
\label{alg:2}
{\bf Input:  } Matrix $A \in \R^{d \times n}$, parameters $m, k, \del, \eps$, guess $\theta$ for optimum error per column.\\
{\bf Output: } A set of outliers $\cO$ and a set of columns $V$ of $A$.

\begin{algorithmic}[1] 
\Procedure{Select}{$A, m, k, \eps, \del$}
	\State Initialize $\cO = V = \emptyset$.  Define $N = \text{\#cols}(A)$. 
	\State Define $n_0 = \frac{\alpha}{\alpha - 1} \cdot \frac{8k}{\eps^3}$, where $\alpha = N/m$.
	\If{$N < n_0$}
	\Return add all the columns of $A$ to $V$ and return $(\cO, V)$
	\ElsIf{ $N \le (1+\del) m$}
	\State add all the columns of $A$ to $\cO$ and return $(\cO, V)$
	\Else
	\For{$16 \log n/\eps^2$ iterations}
		\State Let $R \leftarrow$ a uniformly random sample of $n_0$ columns of $A$
		\State Let $\hat{A}$ be the set of $\eps^3 (N-m)$ columns of $A$ that have the least projection to $\Pi_{R}^\perp$ 
		\State Let $X = \fnorm{\Pi_{R}^\perp \hat{A}}$
		\State  If $X$ is smaller than the least such quantity so far, set $A^* = \hat{A}$, $R^* = R$
	\EndFor
	\State Mark the columns $A^*$ as {\em covered}
	\State Let $(\cO', V')$ be the output of the recursive call $\textsc{Select} (A \setminus A^*, m, k, \eps, \del, \theta)$
	\State \Return $(\cO', R^* \cup V')$
 	\EndIf
\EndProcedure
\end{algorithmic}
\end{algorithm}

Define $n' = n - m$.  Let $z_1, z_2, \dots, z_{n'}$ be the rank $k$ approximation errors of the columns $B_i$ of $B$. (So $\sum_i z_i = n' \theta$.) We will assume (without loss of generality) that $z_i$ are in increasing order.  

\nvspace
\paragraph{Proof outline.}  Consider the first iteration of the algorithm.  We claim that $\fnorm{\Pi_{R^*} A^*} \le (1+\eps) (z_1 + z_2 + \dots + z_{\eps n'})/(\eps n')$ with probability at least $1 - \frac{1}{n^4}$.  This is because with high probability, one of the candidates for $R$ would have chosen $4k/\eps^2$ columns {\em among} $B_1, \dots, B_{\eps n'}$, and then we can apply Lemma 6. Thus, the error for the first $n' \eps^3$ columns covered is at most the average error for the first $\eps n'$ columns.  Subsequently, we will be able to bound the error in each step in terms of the smallest $\eps$ fraction of the $z_i$'s that remain. 

\begin{lemma}\label{lem:inductive}
Consider the procedure $\textsc{Select}(A', m, k, \eps, \delta)$, and suppose that $N := \text{\#cols}(A')$ satisfies $N \ge  \max\{ n_0, (1+\del) m \}$, where $n_0$ is defined in step 3 of the algorithm.  Let $B' = A' \cap B$, i.e., the set of inliers in $A'$, and let $y_1, y_2, \dots, y_{|B'|}$ denote the values $z_i$ for $i \in B'$.   Assume w.l.o.g. that $y_i$ are in increasing order.  Then with probability $\ge 1-\frac{1}{n^4}$, the sets $R^*, A^*$ chosen by the procedure satisfy
\begin{equation}
\text{ for all $u \in A^*$, } \fnorm{ \Pi_{R^*}^{\perp} u} \le \frac{\sum_{i=1}^{\eps |B'|} y_i}{\eps |B'|}. \label{eq:inductive}
\end{equation}
\end{lemma}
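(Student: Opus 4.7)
The plan is to apply Lemma~\ref{lem:silvio-lem} to a carefully chosen submatrix of the inliers, and to amplify its constant success probability using the outer $16\log n / \eps^2$ loop.  Define $B^\star$ to be the set of $\eps|B'|$ columns of $B'$ whose $y_i$ values are smallest.  The key observation is that projecting $B^\star$ onto the globally optimal rank-$k$ subspace $B_k$ yields total squared error exactly $\sum_{i=1}^{\eps|B'|} y_i$, so $\err_k(B^\star) \le \sum_{i=1}^{\eps|B'|} y_i$.  Hence the per-column bound promised by Lemma~\ref{lem:silvio-lem} applied to $B^\star$ is at most $(1+\eps)\sum_{i=1}^{\eps|B'|} y_i/(\eps|B'|)$, which matches the target in~\eqref{eq:inductive} up to a $(1+\eps)$-factor that can be absorbed by reparameterizing $\eps$.

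Next, I need to bridge ``uniform sampling on $A'$'' (what the algorithm draws) and ``uniform sampling on $B^\star$'' (what Lemma~\ref{lem:silvio-lem} wants).  Using $|B'|\ge N - m = N(\alpha-1)/\alpha$ and the algorithm's setting $n_0 = \frac{\alpha}{\alpha-1}\cdot\frac{8k}{\eps^3}$, the expectation $\E|R \cap B^\star|$ is at least $8k/\eps^2 = 2\cdot (4k/\eps^2)$.  Hypergeometric concentration therefore gives $|R \cap B^\star| \ge 4k/\eps^2$ with probability $\Omega(1)$.  Conditioned on this, $R \cap B^\star$ is uniform on subsets of $B^\star$ of its (random) size; picking any $(4k/\eps^2)$-subset $R' \subseteq R\cap B^\star$ uniformly at random produces a genuine uniform sample from $\binom{B^\star}{4k/\eps^2}$, so Lemma~\ref{lem:silvio-lem} applies to the pair $(B^\star, R')$.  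Because $\spn(R') \subseteq \spn(R)$, projection energies orthogonal to $\spn(R)$ are no larger than those orthogonal to $\spn(R')$.

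Putting these together, a single iteration of the outer loop succeeds---in the sense that there exist $\Omega(\eps^2|B^\star|) = \Omega(\eps^3(N-m))$ columns of $A'$ whose squared projection orthogonal to $\spn(R)$ is at most the target threshold---with probability $\Omega(\eps^2)$.  The loop runs $16\log n/\eps^2$ independent trials (a constant factor more if one wants to be strict about the $n^{-4}$ bound) and retains the $R^*$ minimizing $X = \fnorm{\Pi_R^\perp \hat A}$; the failure probability drops to at most $(1 - \Omega(\eps^2))^{\Theta(\log n/\eps^2)} \le n^{-4}$.  Since $A^*$ is the collection of $\eps^3(N-m)$ columns of $A'$ with the smallest individual squared projections onto $\Pi_R^\perp$, and we have just exhibited at least that many columns lying below the target, every $u \in A^*$ satisfies~\eqref{eq:inductive}.

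\paragraph{Anticipated obstacle.}  The most delicate point is the transfer from uniform sampling over $A'$ to Lemma~\ref{lem:silvio-lem}, which is stated for uniform sampling over the target matrix itself.  I would handle this by conditioning on $|R \cap B^\star|$ and invoking monotonicity in the sample size (a uniform size-$s$ subsample of a uniform size-$t$ subsample of $B^\star$ is itself uniform on $\binom{B^\star}{s}$), paying a constant-factor loss to absorb the intersection-size concentration event.  A secondary minor nuisance is matching the count $\eps^3(N-m)$ used in the algorithm with the $\eps^2|B^\star|/8 = \eps^3|B'|/8$ columns produced by Lemma~\ref{lem:silvio-lem}; since $|B'| \ge N-m$ the two agree up to a $\Theta(1)$ factor and can be reconciled by a small adjustment of the constant in the selection step.
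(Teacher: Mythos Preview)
Your proposal is correct and follows essentially the same route as the paper: define the set $Q=B^\star$ of the $\eps|B'|$ inliers with smallest $y_i$, show $\E|R\cap Q|\ge 8k/\eps^2$ so that $|R\cap Q|\ge 4k/\eps^2$ with constant probability, invoke Lemma~\ref{lem:silvio-lem} on $Q$ to get $\Omega(\eps^2|Q|)$ well-covered columns per trial with probability $\Omega(\eps^2)$, and amplify over the $16\log n/\eps^2$ iterations.  If anything you are slightly more careful than the paper on two points it treats loosely: the subsampling device to pass from ``$R\cap B^\star$ has size $\ge 4k/\eps^2$'' to a genuine uniform sample of the exact size demanded by Lemma~\ref{lem:silvio-lem}, and the explicit accounting of the $(1+\eps)$ and factor-of-$8$ discrepancies between Lemma~\ref{lem:silvio-lem}'s guarantees and the constants appearing in~\eqref{eq:inductive} and in the algorithm's $\eps^3(N-m)$ selection size.
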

\begin{proof}
For each iteration of the loop (8-13) of the algorithm, we show that the probability of the chosen $R, \hat{A}$ satisfying the bound in Eq.~\eqref{eq:inductive} is at least $\eps^2/2$.  The conclusion of the lemma then follows immediately.

First, note that for $\alpha = N/m$, we have $|B'|/|A'| \ge (\alpha - 1)/\alpha$.  Let us also denote by $Q$ the set of $\eps |B'|$ columns of $B'$ that have the smallest $z_i$ values.   Now, the expected size of $R \cap Q$ is 
\[ \frac{|R|}{|A'|} \cdot \eps |B'| \ge \frac{8k}{\eps^2}. \]
Thus the probability that this is $\ge 4k/\eps^2$ is at least $1/2$. (Formally, this follows from Hoeffding's inequality, which also applies to sums of random variables without replacement.)

Conditioned on $|R \cap Q| \ge 4k/\eps^2$, we can apply Lemma~\ref{lem:silvio-lem} to conclude that with probability $\ge \eps^2$, at least an $\eps^2$ fraction of the columns in $Q$, the projection orthogonal to $\spn{}(R)$ is bounded by $\sum_{i \in Q} z_i / |Q|$.  Note that by definition, this is precisely the RHS of Eq~\eqref{eq:inductive}.   Thus the probability that the chosen $R, \hat{A}$ satisfy~\eqref{eq:inductive} is at least $\eps^2/2$, and this completes the proof of the lemma.
\end{proof}

\begin{lemma}\label{lem:all-else}
When the algorithm terminates, the total error incurred is bounded by $\frac{1+\eps}{1-\eps} n' \theta$, with high probability.   Further, the depth of the recursion is upper bounded by $\frac{\log (n/(\delta m))}{\eps^3}$.  The total number of columns chosen is at most \[ \frac{16 k}{\eps^6} \left(  \log (n/m) + \frac{2}{\del} \right).\]
\end{lemma}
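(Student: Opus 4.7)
The plan is to prove the three claims of Lemma~\ref{lem:all-else} separately: the depth and column-count bounds follow from the geometric recursion on $N_j$, while the error bound is the subtle part and rests on an amortization on top of Lemma~\ref{lem:inductive}.

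For the depth, step 13 of the algorithm reduces $N$ by $|A^*| = \eps^3(N - m)$, so $N_{j+1} - m = (1-\eps^3)(N_j - m)$, and the recursion terminates once $N_j - m \le \delta m$. Starting from $N_0 - m \le n$, this takes at most $\log(n/(\delta m))/\log(1/(1-\eps^3)) \le \log(n/(\delta m))/\eps^3$ levels, giving the stated depth. For the column count, each level adds $n_0^{(j)} = \frac{N_j}{N_j - m}\cdot \frac{8k}{\eps^3} = \left(1 + \frac{m}{N_j - m}\right) \cdot \frac{8k}{\eps^3}$ columns to $V$; summed across all levels, the ``$1$'' term contributes $\frac{8k \cdot J}{\eps^3} = O\bigl(\frac{k}{\eps^6}\log(n/(\delta m))\bigr)$, while the tail $\frac{8km}{\eps^3}\sum_j \frac{1}{N_j - m}$ is a geometric series in $(1-\eps^3)^{-j}$ dominated by its largest term, yielding an $O(\frac{k}{\eps^6 \delta})$ contribution. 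Using $\log(1/\delta) \le 1/\delta$, these combine to the claimed $O(\frac{k}{\eps^6}(\log(n/m) + 1/\delta))$ bound.

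Next, the error. By a union bound over the $J = \mathrm{poly}(n)$ levels (Lemma~\ref{lem:inductive} succeeds with probability $\ge 1 - 1/n^4$ at each), with high probability every covered column $u \in A^*_j$ satisfies $\norm{\Pi_{R^*_j}^\perp u}^2 \le s_j/(\eps|B'_j|)$. Since $R^*_j \subseteq V$, the total error is at most $\sum_j |A^*_j|\cdot s_j/(\eps|B'_j|)$, and using $|A^*_j|/|B'_j| \le \eps^3$ (which holds since $|B'_j| \ge N_j - m$) this simplifies to $\eps^2 \sum_j s_j$.

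The main obstacle is then to bound $\sum_j s_j \le \frac{n'\theta}{(1-\eps)\eps^2}$, which gives total error $\frac{n'\theta}{1-\eps}$; combined with the $(1+\eps)$ slack already lurking in Lemma~\ref{lem:silvio-lem}, this yields the claimed $\frac{1+\eps}{1-\eps}\cdot n'\theta$. The approach is a charging argument: an inlier $i$ contributes $z_i$ to $s_j$ only at those levels at which $i$ lies in the bottom-$\eps$-fraction $Q_j$ of $B'_j$, i.e., $|\{i' \in B'_j : z_{i'} \le z_i\}| \le \eps|B'_j|$. Parameterizing by the covered fraction $t_j$, one shows that the intervals $[t_j, t_j + \eps(1-t_j)]$ (in the sorted-$z$ axis) contain the rank of a fixed inlier for only $O(1/((1-\eps)\eps^2))$ consecutive levels, because the uncovered mass $N_j - m$ decays by a factor of $(1-\eps^3)$ per step while $|Q_j| = \eps|B'_j|$ scales proportionally to $|B'_j|$. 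Summing $z_i$ with this multiplicity gives the desired bound. The subtle part, and where most of the care will go, is that the algorithm does not cover the smallest-$z_i$ inliers at each level (it covers the smallest-projection columns, which may include outliers and non-bottom inliers), so the amortization must be shown to be robust against adversarial covering patterns—using that the geometric decay of $|A^*_j|$ itself bounds the per-level payment regardless of which inliers or outliers are selected into $A^*_j$.
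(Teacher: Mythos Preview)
Your depth and column-count arguments are correct and essentially match the paper's (the paper splits the recursion into the phases $\alpha\ge 2$ and $\alpha<2$ instead of summing the geometric series directly, but the content is the same).

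The error bound, however, has a real gap. After writing the total error as $\sum_j |A^*_j|\cdot s_j/(\eps|B'_j|)$ you replace $|A^*_j|/(\eps|B'_j|)$ by $\eps^2$ and then try to show $\sum_j s_j \le n'\theta/((1-\eps)\eps^2)$ via a charging argument. That last inequality is false in general. Take $n=2m+1$, all $z_i=\theta$, and let the $m$ outliers be the zero vector. At every level the $\eps^3(N_j-m)$ columns of smallest projection are outliers, so no inlier is ever covered and $|B'_j|=n'$ for all $j$; consequently $Q_j$ is the same set of $\eps n'$ inliers at every level and $s_j=\eps n'\theta$ throughout. Since the recursion runs for $J\approx \log(1/\delta)/\eps^3$ levels, $\sum_j s_j \approx n'\theta\log(1/\delta)/\eps^2$, which exceeds $n'\theta/((1-\eps)\eps^2)$ whenever $\log(1/\delta)>1/(1-\eps)$. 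Your own caveat (``the amortization must be robust against adversarial covering patterns'') is exactly the failure mode here: when the algorithm spends levels covering outliers, $t_j$ does not move, and a fixed inlier can sit in $Q_j$ for arbitrarily many levels.

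The point is that you cannot afford to decouple $|A^*_j|$ from $s_j/(\eps|B'_j|)$. The paper keeps them together: it partitions the sorted $z$-indices into groups $G_1,G_2,\dots$ of sizes $\eps n',\eps(1-\eps)n',\dots$, and argues that while the \emph{cumulative} number of covered columns lies in the $t$-th block $[\,|G_1|+\dots+|G_{t-1}|,\ |G_1|+\dots+|G_t|\,)$, the per-column bound from Lemma~\ref{lem:inductive} is at most $(1+\eps)E_{t+1}/|G_{t+1}|$. Summing $|G_t|\cdot E_{t+1}/|G_{t+1}|=E_{t+1}/(1-\eps)$ over $t$ gives $(1+\eps)n'\theta/(1-\eps)$. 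In your counterexample this works because, even though $|B'_j|$ stays at $n'$, the total number of covered columns $\sum_j|A^*_j|=n-N_J\le n'$ never exceeds $|G_1|$ by much, so the whole run is charged at the cheap rate $\approx\theta$. In short: track the running total $\sum_{j'<j}|A^*_{j'}|$ rather than the level index $j$, and bound the per-column error as a function of that total.
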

\begin{proof}
Using Lemma~\ref{lem:inductive}, we can analyze the overall error as follows.  Since the success probability in each iteration is $1-\frac{1}{n^4}$ we assume henceforth that the conclusion of Lemma~\ref{lem:inductive} holds for all the recursive calls.  Let us divide the indices $[n']$ (from left to right) into groups of size $\eps n', \eps (1-\eps)n', \eps (1-\eps)^2 n', \dots$. Let us call the groups $G_1, G_2, \dots$, and let $E_i$ denote $\sum_{j \in G_i} z_j$. Now, by the lemma, until the algorithm marks $\eps n'$ columns as covered, we have that the average error in the marked columns is at most $(1+\eps) E_2 / |G_2|$, w.h.p. (Note that the set of columns marked by the algorithm can be quite different from $G_1$; but this will only help the argument, which only requires that an $\eps$ fraction of the uncovered columns has average error $\le E_2/|G_2|$.)  Likewise, until the next $\eps (1-\eps) n'$ columns are marked covered, the average error is $\le (1+\eps) E_3/|G_3|$. This continues until the number of unmarked columns falls below $k/\eps^3$, at which point the error will be zero, as all the columns will be picked.

Thus, we cover the first $|G_1|$ columns with average error $\le (1+\eps) E_2/|G_2|$, the next $|G_2|$ columns with average error $\le (1+\eps) E_3/|G_3|$, and so on. And when $|G_i|$ gets to $k/\eps^3$ (or $\delta m$), the error is zero and the procedure stops. Since $|G_i|/|G_{i+1}| = 1/(1-\eps)$, we can bound the total error by
\[  (1+\eps) \frac{E_2 + E_3 + \dots}{1-\eps} \le \frac{(1+\eps) n'\theta}{1-\eps}.\]
This completes the proof of the error bound.

For the depth of the recursion, note that we always mark precisely $\eps^3(N- m)$ columns as marked.  The procedure terminates when $N-m$ is $\max \{n_0, (1+\del)m\}$, and this gives the desired bound.

To bound the number of columns chosen, we need to analyze the sum of $n_0$ over the recursive calls.  We note that as long as $\alpha \ge 2$, we have $n_0 \le 16 k/\eps^3$.  By the argument above, the number of recursive steps needed to reach $\alpha = 2$ is at most $(1/\eps^3) \cdot \log (n/m)$.  This bounds the number of columns chosen until this step by $(16 k/\eps^6) \log (n/m)$.  Next,  as $(\alpha - 1)$ drops from $2^{-j}$ to $2^{-j-1}$,  we end up with  $n_0 \le 2^{j+1} \cdot 8k/\eps^3$  columns in each call to {\sc Select}.   The number of recursive calls necessary for this drop in $\alpha$ is at most $1/\eps^3$.  Thus, summing over $j$ from $0$ through $\log (1/\delta)$, we have that the number of columns chosen is at most
\[  \sum_{j = 0}^{\log(1/\del)}  2^{j+1} \frac{8k}{\eps^6} \le \frac{32k}{\eps^6 \del}.\]
This completes the proof of the lemma.
\end{proof}

Theorem~\ref{thm:2} now follows easily. 
\begin{proof}[Proof of Theorem~\ref{thm:2}]
The desired bound on the number of columns, as well as the bound on the approximation ratio and the number of outliers all follow from Lemma~\ref{lem:all-else}.  
\end{proof}

\subsection{Entry-wise $\ell_p$ error}
We notice that the algorithm above can easily be adapted to the case in which we care about the entry wise $\ell_p$ error. Again, we have a sampling lemma in the setting without outliers.  As mentioned earlier, the following lemma was shown in~\cite{Silvio}.  
\begin{lemma}[Lemma 6 of~\cite{Silvio}]
Let $A \in \R^{d \times n}$, and let $A_{k,p}$ be a minimizer of $\norm{A - X}_p$ over rank-$k$ matrices $X$.  Let $S$ be a uniformly random subset of $[n]$ of size $2k$.  Then, w.p. at least $1/10$, there exists a set of $n/10$ columns of $A$ whose optimum $\ell_p^p$ reconstruction error using the columns of $A\subb{S}$ is at most $100 (k+1) \norm{A - A_{k,p}}_p^p /n$.
\end{lemma}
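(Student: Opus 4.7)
The plan is to mirror the structure of the proof of Lemma~\ref{lem:silvio-lem}, replacing Theorem~\ref{thm:guruswami} by a weaker column-based $\ell_p^p$ approximation statement. Setting $s = 2k$ and $\theta := \norm{A - A_{k,p}}_p^p / n$, I would view $S$ as generated by first sampling $T \subseteq [n]$ of size $s+1 = 2k+1$ uniformly and then removing a uniformly random element $u \in T$. Since $\E\bigl[\norm{A\subb{T} - (A_{k,p})\subb{T}}_p^p\bigr] = |T|\theta$, Markov's inequality gives that with probability at least $1/2$ this quantity, and hence the rank-$k$ $\ell_p^p$ reconstruction error of $A\subb{T}$, is at most $2|T|\theta$; call such $T$ \emph{good}.

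The substitute for Theorem~\ref{thm:guruswami} in the $\ell_p$ regime is a statement of the form: for every matrix $B$ there exists a subset $W$ of at most $k+1$ columns of $B$ whose $\ell_p^p$ reconstruction error on $B$ is at most $C(k+1)$ times the rank-$k$ $\ell_p^p$ optimum, for an absolute constant $C$. One obtains this by writing a rank-$k$ $\ell_p^p$ minimizer as a linear combination of $k$ of its own columns and transferring to the corresponding columns of $B$, with a factor $(k+1)$ blow-up coming from a H\"older-style conversion between column and entrywise $\ell_p$ quantities. Applied to a good $T$, this produces $W \subseteq T$ with $|W| \le k+1$ and $\ell_p^p$ reconstruction error on $A\subb{T}$ at most $2C(k+1)|T|\theta$. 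Conditioning on $T$ good and on the independent removal of $u$, two events each occur with probability at least $1/2$: (i) $u \notin W$ (since $|W|/|T| \le (k+1)/(2k+1) \le 1/2$), whence $W \subseteq S$ and the $\ell_p^p$ reconstruction of $u$ using $A\subb{S}$ is no larger than that using $W$; and (ii) by a second Markov step on the random $u$, this reconstruction error is at most twice the per-column average, i.e., at most $4C(k+1)\theta$. Absorbing the constants, the per-column $\ell_p^p$ error of $u$ against $A\subb{S}$ is at most $100(k+1)\theta = 100(k+1) \norm{A - A_{k,p}}_p^p / n$ with constant joint probability.

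Finally, as in the proof of Lemma~\ref{lem:silvio-lem}, I would form the bipartite graph between $(s+1)$-tuples $T$ and $s$-tuples $S$, drawing an edge whenever $T$ is good and the removed $u = T \setminus S$ realizes the per-column bound. The constant lower bound on the edge density, combined with a maximum right-degree of $n-s$, implies via double counting that a constant fraction of samples $S$ are endpoints of at least a $1/10$ fraction of the edges incident to them; for each such $S$, the corresponding $u$'s supply the claimed set of $n/10$ columns. The main obstacle is establishing the $\ell_p$ analog of Theorem~\ref{thm:guruswami} with the right $O(k)$ constant: this is precisely the ingredient that turns the $(1+\eps)$ factor of the Frobenius argument into $100(k+1)$, and, as noted after Theorem~\ref{thm:lpnorm}, the $O(k)$ dependence is unavoidable for column-based $\ell_p$ approximations.
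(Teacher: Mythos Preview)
The paper does not give its own proof of this statement: it is quoted verbatim as Lemma~6 of~\cite{Silvio} and used as a black box. So there is no in-paper proof to compare against. What the paper \emph{does} prove is Lemma~\ref{lem:silvio-lem}, which is exactly the Frobenius-norm refinement of this $\ell_p$ lemma, and your proposal correctly reverse-engineers the original \cite{Silvio} argument from that proof: sample $T$ of size $s+1$, drop a random column to obtain $S$, use Markov to call $T$ ``good'', invoke a column-based approximation inside $T$ to produce $W$, and double-count over the bipartite $(s{+}1)$-tuple/$s$-tuple graph. That is indeed the skeleton of the \cite{Silvio} proof, with the Guruswami--Sinop theorem replaced by the $O(k)$-factor $\ell_p$ column lemma you describe; your identification of that lemma as the missing ingredient is exactly right.

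One arithmetic slip to flag: you assert $|W|/|T| \le (k+1)/(2k+1) \le 1/2$, but $(k+1)/(2k+1) > 1/2$ for every $k \ge 1$, so the bound $\Pr[u \notin W] \ge 1/2$ does not follow as written. Relatedly, events (i) $u \notin W$ and (ii) the Markov bound on $u$'s reconstruction error are both functions of the same random $u$, so they are not independent, and a bare union bound with two probability-$1/2$ events gives nothing. Both issues are easily repaired within the slack afforded by the target constant $100(k+1)$: either take $|W| \le k$ (the $\ell_p$ column lemma actually yields $k$ columns), or loosen the Markov threshold to, say, $4$ times the average so that (ii) holds with probability $\ge 3/4$ and a union bound against $\Pr[u \in W] < 1/2$ leaves a positive margin. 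With those tweaks your outline goes through and matches the argument the paper is citing.
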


We can now apply the reasoning from earlier, along with a modified algorithm, in which we treat a column as ``covered'' if the $\ell_p^p$ reconstruction error using the columns chosen is at most $100 (k+1)/n$ times the guess for the optimum.  One component we need here is to be able to compute the optimum $\ell_p^p$ reconstruction error.  This can be done via a convex program for any $p \ge 1$.  We refer to~\cite{Silvio} for the details.  By following the proof earlier, we obtain Theorem~\ref{thm:lpnorm}.  We omit the details.

The number of columns is now $O \left( k( \log (n/m) + 1/\delta) \right)$.

\section{Limits of approximation}\label{sec:hardness}
We make some simple observations about the computational complexity of the problem of PCA with outliers.  
First, we consider the consequences of a reduction due to Hardt and Moitra~\cite{HardtMoitra} to our setting. This yields a strong impossibility result assuming the small set expansion (SSE) conjecture (see~\cite{sse1}). Then, we show that if we do not allow a slack in the number of outliers, then we cannot even hope to find a ``reasonably small'' dimensional subspace with an error within any multiplicative factor.

\paragraph{Notes on our reductions.} Before proceeding, we note the following
\begin{enumerate}
\item In the reductions, the number of outliers is very close to the total number of columns. While this is intuitively not the regime of ``practical interest'', it is easy to see that by padding $O(n)$ copies of a vector orthogonal to all the ones produced in the reduction, we (a) obtain the setting in which the number of outliers is a small constant fraction of the total number of columns, and (b) have the same lower bounds (because a change of $\pm 1$ to the target dimension does not matter in our proofs). Intuitively, these are cases in which one of the components is easy to find, and it becomes trickier to find the others. These are precisely the type of pathologies avoided by the ``well spread'' assumptions in~\cite{CandesRobust, Chen}.
\item Next, we note that both the reductions are in the case when the optimum error is zero.  Thus, our lower bounds  imply that we cannot obtain {\em any} multiplicative approximations under the corresponding assumptions. 
\end{enumerate}

\subsection{Reduction from SSE}
Hardt and Moitra~\cite{HardtMoitra} give a reduction from small set expansion (SSE) to {\em robust subspace recovery}, a problem closely related to PCA with outliers. Let us start by recalling the $\sse (\delta, \eps)$ problem.  We are given a $\Delta$-regular graph $G = (V, E)$ on $n$ vertices, and the goal is to distinguish between the following two cases
\begin{quote}
YES:  there exists a set $S$ of size $\delta n$ with $\Phi(S) \le 1/2$.\footnote{As is standard, $\Phi(S)$ denotes the conductance of $S$, namely the quantity $\frac{E(S, \overline{S})}{\Delta \cdot \min\{|S|, n-|S|\}}$. }\\
NO:  every set of size $\le \delta n$ has expansion $\ge 1 - \eps$. 
\end{quote}
We note that typically the $\sse$ problem is stated with the YES case having $\Phi(S) \le \eps$.  The version above is clearly at least as hard.  The ``SSE conjecture'' \cite{sse1}  states that for any $\eps > 0$, there is a small enough $\delta > 0$ such that $\sse(\delta , \eps)$ is hard (for polynomial time algorithms).

Now, the reduction of~\cite{HardtMoitra} constructs a collection of vectors in $\R^{|V|}$, one for each edge.  The edge $\{i,j\}$ corresponds to the vector $\e_i + \e_j$, where $\e_i$ denotes the unit vector of the standard basis.  For an edge $f$, we denote this by $v(f)$.  The main lemma behind their reduction is the following:
\begin{lemma}[Hardt, Moitra~\cite{HardtMoitra}]\label{lem:edge-subspace}
Let $E'$ be a subset of edges, and let $n'$ be the total number of vertices that these edges are incident to. Then we have
\[  n'/2 \le  \dim (\spn(\{ v(f) : f \in E'\})) \le n'. \]
\end{lemma}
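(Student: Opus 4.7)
The upper bound is essentially by inspection: every vector $v(f) = \mathbf{e}_i + \mathbf{e}_j$ for $f \in E'$ lies in the coordinate subspace of $\R^{|V|}$ supported on the $n'$ touched vertices, so the span has dimension at most $n'$.

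For the lower bound, my plan is to analyze the edge vectors component by component. Let $H = (V', E')$ be the subgraph induced by $E'$ on the set of touched vertices, and let $C_1, C_2, \dots, C_t$ be its connected components, where $C_i$ has $k_i$ vertices (and at least one edge, so $k_i \ge 2$). Since edge vectors from different components are supported on disjoint coordinates, the spans are orthogonal and the total dimension equals $\sum_i \dim(\spn\{v(f) : f \in E(C_i)\})$. It therefore suffices to show that for each connected component $C_i$, this dimension is at least $k_i - 1$; summing then gives at least $\sum_i (k_i - 1) \ge \sum_i k_i/2 = n'/2$, where we used $k_i - 1 \ge k_i/2$ for $k_i \ge 2$.

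The key step is showing that for a connected subgraph on $k$ vertices, the edge vectors $\{v(f)\}$ span at least a $(k-1)$-dimensional subspace. My approach is to pick any spanning tree $T$ of the component and argue that the $k-1$ vectors $\{v(f) : f \in T\}$ are linearly independent. This is a standard leaf-peeling argument: take any leaf $v$ of $T$; the unique edge $f_v \in T$ incident to $v$ is the only tree edge whose vector has a nonzero entry in coordinate $v$, so in any vanishing linear combination $\sum_{f \in T} \alpha_f v(f) = 0$, the coefficient $\alpha_{f_v}$ must be zero. Removing $v$ and $f_v$ yields a spanning tree on $k-1$ vertices, and induction on $k$ finishes the argument.

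The proof is mostly bookkeeping, and I do not anticipate a real obstacle. The one place to be careful is the boundary case $k_i = 2$ (a component consisting of a single edge): the bound gives dimension at least $1$, and indeed $v(\{i,j\}) = \mathbf{e}_i + \mathbf{e}_j$ is a nonzero vector, while $k_i/2 = 1$, so the inequality $k_i - 1 \ge k_i/2$ is tight. Note also that this proof gives a slightly stronger statement, namely $\dim \ge n' - (\text{number of connected components})$, which is what one would expect from the graph-theoretic interpretation of the (unsigned) incidence matrix.
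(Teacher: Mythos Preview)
Your proof is correct and follows essentially the same approach as the paper: both establish the lower bound by showing that the edge vectors of a spanning tree (or forest) are linearly independent via a leaf argument, and both use the trivial coordinate-subspace observation for the upper bound. Your version is a bit more explicit about decomposing into connected components and tracking the inequality $k_i - 1 \ge k_i/2$, whereas the paper states the spanning-tree linear-independence claim directly and proves it by contradiction (pick a leaf of the support forest and observe its coordinate cannot vanish), but the content is the same.
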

The proof is simple, and proceeds as follows.  First, the upper bound is trivial, as all the vectors are spanned by the vectors in the standard basis corresponding to the vertices incident to $E'$.  The lower bound follows from arguing that for any spanning tree $T'$, the vectors $v(f)$ corresponding to the edges of $T'$ are linearly independent.  This can be shown by way of contradiction.  Suppose there exist $\alpha_f$ such that $\sum_{f \in T'} \alpha_f v(f) = 0$.  The edges corresponding to the non-zero coefficients form a forest.  Now consider any {\em leaf} $j$, i.e., a vertex that is incident to precisely one edge in the forest (which has to exist).  Then $\iprod{\e_j, \sum_{f} \alpha_f v(f)} \ne 0$, which is a contradiction.

The lemma implies the following about PCA with outliers.
\begin{theorem}\label{thm:hardt-moitra}
Suppose $\delta, \eps$ are constants such that $\sse(\delta, \eps)$ is hard.  Then, even in the zero error case of PCA with outliers, there is no efficient algorithm that can find a subspace of dimension $k / 6\sqrt{\eps}$ containing all but $(1+\delta/4)m$ of the points.  (Recall $m$ is the prescribed number of outliers.)
\end{theorem}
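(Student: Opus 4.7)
The plan is to instantiate the Hardt--Moitra reduction directly and then combine Lemma~\ref{lem:edge-subspace} with an averaging argument to transfer the SSE expansion bound from sets of size $\delta n$ to all vertex sets we care about. Given an instance of $\sse(\delta, \eps)$ on a $\Delta$-regular graph $G = (V, E)$ with $|V| = n$, I build a PCA instance whose columns are the vectors $v(f) = \e_i + \e_j$, one per edge $f = \{i,j\}$, so the total number of points is $|E| = \Delta n/2$. I set the target rank to $k = \delta n$ and the outlier budget to $m = \Delta n (2-\delta)/4$. Assuming for contradiction that an efficient algorithm always returns, in the zero-error case, a subspace of dimension $d = k/(6\sqrt{\eps})$ containing all but $(1+\delta/4)m$ of the $v(f)$, I will show that this algorithm distinguishes the two SSE cases, contradicting the hypothesis.

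\paragraph{Completeness.} In the YES case take $S$ with $|S| = \delta n$ and $\Phi(S) \le 1/2$. A degree count gives $E(S,S) \ge \Delta \delta n/4$, and every edge contained in $S$ has its vector in $\spn\{\e_i : i \in S\}$, a subspace of dimension at most $\delta n = k$. Removing the at most $|E| - \Delta \delta n/4 \le m$ edges with an endpoint outside $S$ leaves the remaining inliers exactly inside this $k$-dimensional subspace, so the optimum is zero and the algorithm must succeed on this instance.

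\paragraph{Soundness.} In the NO case, let $V$ be any $d$-dimensional subspace, let $E'$ be the set of edges whose vectors lie in $V$, and let $T$ be the set of vertex endpoints of $E'$. Lemma~\ref{lem:edge-subspace} applied inside $V$ gives $|T| \le 2\dim V \le 2d$, and trivially $|E'| \le E(T,T)$. The NO hypothesis bounds $E(T',T') \le \eps \Delta |T'|/2$ for every $T'$ with $|T'| \le \delta n$. If $|T|>\delta n$ I sample a uniformly random $T' \subseteq T$ of size $\delta n$: each edge of $E(T,T)$ lands in $T' \times T'$ with probability $\delta n(\delta n -1)/(|T|(|T|-1))$, so taking expectations and applying the SSE bound to the fixed-size sample gives $E(T,T) \le \eps \Delta |T|^2/(2\delta n)$ (up to a lower-order factor). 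Plugging in $|T| \le 2d$ and $d = \delta n/(6\sqrt{\eps})$ yields $|E'| \le \Delta \delta n/18$. A direct arithmetic check shows $|E| - (1+\delta/4)m = \Delta n \delta(2+\delta)/16 \ge \Delta \delta n/8 > \Delta \delta n/18$, so no subspace of dimension $d$ can cover the required fraction of the $v(f)$. Consequently the hypothetical algorithm would distinguish YES from NO in polynomial time, contradicting the hardness of $\sse(\delta,\eps)$.

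\paragraph{Main obstacle and remarks.} The delicate step is controlling $E(T,T)$ once $|T|$ exceeds $\delta n$, because SSE only gives expansion for sets up to size $\delta n$. The random sub-sampling converts the hypothesis into a quadratic bound $E(T,T) = O(\eps \Delta |T|^2/(\delta n))$, and it is exactly this quadratic (Cauchy--Schwarz-style) loss that accounts for the $\sqrt{\eps}$ appearing in the denominator of the target dimension $k/(6\sqrt{\eps})$. The only side condition needed is $\delta \le 1/2$ so the SSE expansion inequality $E(T',\overline{T'}) \ge (1-\eps)\Delta|T'|$ applies; this is consistent with the regime in which the SSE conjecture is used, where $\delta$ is chosen small after $\eps$.
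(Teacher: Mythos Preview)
Your proposal is correct and follows essentially the same route as the paper: the same Hardt--Moitra edge-to-vector construction, the same choice $k=\delta n$ and $m=\tfrac{n\Delta}{2}(1-\delta/2)$, and in the NO case the same random sub-sampling of the endpoint set $T$ down to size $\delta n$ to convert the SSE expansion hypothesis into a quadratic bound $E(T,T)=O(\eps\Delta|T|^2/(\delta n))$, which is exactly what produces the $\sqrt{\eps}$ in the dimension bound. If anything, your YES case is stated more carefully than the paper's (you correctly count only edges with \emph{both} endpoints in $S$ via $\Phi(S)\le 1/2 \Rightarrow E(S,S)\ge \Delta\delta n/4$, whereas the paper loosely says ``edges incident to $S$''), and your NO-case arithmetic is made explicit rather than left implicit.
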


The theorem says that even if we allow a $(1+\delta)$ fraction more outliers, we cannot have any constant factor approximation, assuming SSE. This, in a sense, is why the dimension of the space returned in Theorem~\ref{thm:2} has to have a dependence on $\del$. 

\begin{proof}
We consider the reduction as above, and set the upper bound on the number of outliers to
\[ m := \frac{ n\Delta}{2} - \frac{\delta n \Delta}{2} = \frac{n\Delta}{2} \left( 1 - \frac{\delta}{2} \right). \] 
In the YES case of SSE, as discussed above, the space spanned by the standard basis vectors corresponding to the non-expanding set contains all the edges incident to that set, and hence we obtain the desired bound on the number of outliers.  

In the NO case, by the choice of parameters, violating the number of outliers by a factor $(1 + \delta/4)$ still means that we must have at least $\Delta \cdot n \delta /8$ {\em inlier} columns.  Suppose there is a space of dimension $\le c \delta n$ that contains this many columns.  Then, by Lemma~\ref{lem:edge-subspace}, there must be a set of $2 c \delta n$ that has at least $\Delta \cdot n \delta/8$ edges.  Now a uniformly random subset of $\delta n$ of these vertices has (in expectation) at least $\Delta \cdot n \delta / 32 c^2$ edges.  Now, if every set of size $\delta n$ has expansion at least $1-\eps$ (since we are in the NO case), then every such set contains at most $\eps \Delta \cdot n \delta $ edges.  Thus, we have that 
\[ \frac{1}{32c^2} \le \eps, \quad \text{which implies } c\ge \frac{1}{6  \sqrt{\eps}}.\]

This completes the proof of the theorem.
\end{proof}

\subsection{Reduction from smallest $r$-edge subgraph}\label{sec:hardness:2}
If we do not allow {\em any} slack in the number of outliers, we show that the situation is rather hopeless. The smallest $r$-edge subgraph (S$r$ES)~\cite{Chlamtac} problem is the following: given a graph $G = (V, E)$ on $n$ vertices, and a parameter $r \le |E|$, the goal is to find an induced subgraph $H$ with the smallest number of {\em vertices}, that has at least $r$ edges. The problem is closely related to (and in some sense is a {\em dual} of) the densest $k$-subgraph problem (DkS)~\cite{Dkspaper}.  \cite{Chlamtac} obtained the best known approximation algorithms for the S$r$ES problem, with an approximation factor $O(n^{2 - \sqrt{3} + \eps})$, for any $\eps > 0$.  The complexity of the problem is also closely related to that of DkS, and indeed, the following is believed to be true (see~\cite{Chlamtac, dksgaps} and references therein):
\begin{conjecture}\label{conj:smes}
The smallest $r$-edge subgraph problem is NP-hard to approximate to a factor $n^{c}$, for some absolute constant $c > 0$.  Specifically, there exist parameters $r, d$ such that it is NP-hard to distinguish between
\begin{quote}
YES:  there exists an induced subgraph on $d$ vertices with $r$ edges. \\
NO: the smallest induced subgraph containing $r$ edges has at least $d n^c$ vertices.
\end{quote} 
\end{conjecture}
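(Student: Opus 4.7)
The plan is to attempt a reduction from Label Cover to S$r$ES that produces a gap, followed by an attempt at gap amplification. Since the statement is a long-standing conjecture that resists unconditional proof, the proposal can only be a partial reduction; the honest output is a base construction, a soundness analysis, and a pinpointing of the polynomial-gap PCP ingredient that is missing.

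\textbf{Base reduction.} Given a bipartite Label Cover instance $\mathcal{L} = (U, V, E, \Sigma, \{\pi_e\})$ with projection constraints, I would construct a graph $G_\mathcal{L}$ on vertex set $(U \cup V) \times \Sigma$: for each $e = (u,v) \in E$ and each pair $(a,b) \in \Sigma^2$ satisfying $\pi_e(a) = b$, include the edge $\{(u,a),(v,b)\}$. Set $d := |U| + |V|$ and $r := |E|$. In the completeness case, any satisfying assignment $\sigma$ gives the induced subgraph on $\{(v,\sigma(v)) : v \in U \cup V\}$, which has exactly $d$ vertices and realises all $r$ target edges, matching the YES case of the conjecture.

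\textbf{Soundness and amplification.} I would then show that if $\mathcal{L}$ has value at most $\eta$, any induced subgraph of $G_\mathcal{L}$ containing $r$ edges must use at least $d/\sqrt{\eta}$ vertices. The argument is a randomised decoding: given a candidate induced subgraph $H$ on $s$ vertices with $r$ edges, sample one label per original vertex uniformly from the labels that appear in $H$ at that vertex, bound the expected number of satisfied Label Cover edges below by roughly $r (d/s)^2$, and compare with the soundness $\eta$ to force $s \ge d/\sqrt{\eta}$. To push the gap to $n^c$, I would compose $\mathcal{L}$ with $t$-fold parallel repetition so that the value drops to $\eta^t$ while the alphabet grows to $|\Sigma|^t$, and choose $t$ to balance the resulting gap against the blown-up instance size.

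\textbf{Main obstacle.} The amplification step is precisely where the plan fails to deliver a polynomial factor from NP-hardness alone: Raz's parallel repetition yields at best a $2^{-\log^{1-\eps} n}$ soundness under polynomial instance blow-up, which translates, via the soundness lemma above, only to a sub-polynomial inapproximability for S$r$ES; pushing $t$ further inflates the instance super-polynomially and breaks the NP-hardness reduction. Genuine NP-hardness at an $n^c$ factor would require either a new low-blow-up PCP delivering polynomial-gap Label Cover, or a structurally different, gap-preserving polynomial-blowup reduction from (the also conjecturally hard) DkS. This is exactly why the statement is presented as a conjecture rather than a theorem: modulo such a polynomial-gap PCP, the base reduction and soundness lemma above yield Conjecture~\ref{conj:smes}, but producing that PCP is itself a major open problem.
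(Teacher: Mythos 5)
This statement is not proved in the paper at all: it is stated as Conjecture~\ref{conj:smes}, i.e., as a complexity \emph{assumption}, with the authors citing the S$r$ES/D$k$S literature (\cite{Chlamtac, dksgaps}) only as evidence for its plausibility. So there is no proof in the paper to compare against, and your refusal to claim a full proof is exactly the right call — your diagnosis of the obstruction (no known PCP or parallel-repetition route yields a polynomial-factor gap under NP-hardness with polynomial blow-up; the known evidence for such gaps comes from average-case/planted-clique-style or ETH-based arguments for D$k$S rather than from NP-hardness) matches why the authors phrase this as a conjecture rather than a theorem.

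On the partial content you do offer: the base reduction from Label Cover (vertex clouds $(w,a)$, edges for satisfying pairs, $d = |U|+|V|$, $r = |E|$) and the completeness analysis are fine and are the standard template in this literature. The one step I would flag is the soundness claim that a subgraph on $s$ vertices with $r$ edges yields expected satisfied fraction ``roughly $r(d/s)^2$'': the honest bound from uniform random decoding is $\sum_e n_e / (|S_u||S_v|)$ where $n_e$ is the number of $H$-edges over the cloud of $e$ and $\sum_w |S_w| = s$, and getting from there to $r(d/s)^2$ requires a convexity/averaging argument (and typically a pruning of vertices with oversized label sets) that is not automatic; edges of $H$ can concentrate on Label Cover edges whose endpoints carry many labels. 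This is repairable by standard means, but as written the inequality is asserted rather than derived. None of this changes the bottom line: the statement remains a conjecture, and your write-up correctly treats it as one.
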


Now, the following is a simple corollary.
\begin{corr}\label{thm:smes}
Consider the PCA with outliers problem in which the algorithm is constrained to ignore at most $m$ outliers (without any slack).  Then, assuming Conjecture~\ref{conj:smes}, there is a constant $c >0$ such that it is NP-hard to find a $k n^c$ dimensional subspace that results in a multiplicative approximation to the objective.
\end{corr}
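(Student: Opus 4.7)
The plan is to reduce the smallest $r$-edge subgraph problem (in the form given by Conjecture~\ref{conj:smes}) to the no-slack version of PCA with outliers, reusing the edge-to-vector construction of Hardt--Moitra. Given an instance $(G,r,d)$ with $G=(V,E)$ and $|V|=n$, I would form the matrix $A \in \R^{|V| \times |E|}$ whose columns are precisely the vectors $v(f) = \e_i + \e_j$ for each edge $f = \{i,j\} \in E$, set the outlier budget to $m := |E| - r$ (imposed with no slack), and set the target rank to $k := d$. The hypothesized algorithm would then produce, for some constant $c' > 0$, a subspace of dimension at most $k n^{c'}$ that, after discarding at most $m$ columns, achieves some fixed multiplicative approximation to the optimal rank-$k$ error on the remaining columns.

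In the YES case of Conjecture~\ref{conj:smes}, pick the set $S \subseteq V$ of $d$ vertices inducing $r$ edges and discard the $|E|-r = m$ edges not induced by $S$; the remaining $r$ inlier vectors lie in $\spn\{\e_j : j \in S\}$, a subspace of dimension $d = k$. So the PCA-with-outliers optimum equals zero, and any multiplicative approximation must also equal zero. In the NO case, every choice of $r$ inlier columns corresponds to a set $E'$ of $r$ edges that is incident to at least $d n^c$ vertices, and Lemma~\ref{lem:edge-subspace} forces
\[ \dim\bigl(\spn\{v(f) : f \in E'\}\bigr) \;\ge\; d n^c / 2. \]
Any subspace that achieves zero reconstruction error on a valid choice of $r$ inliers must contain this span, hence have dimension at least $k n^c / 2$.

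Combining the two cases, a polynomial-time algorithm that, under a no-slack outlier budget $m$, returns a $k n^{c'}$-dimensional subspace with any multiplicative approximation guarantee would distinguish YES from NO: in YES it must output zero error (which it can in dimension $k$), while in NO the only way to achieve zero error is to use dimension at least $k n^c / 2 > k n^{c'}$ for any $c' < c$ and $n$ large enough. Setting $c' := c/2$ yields the constant claimed by the corollary and completes the reduction.

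Rather than a real obstacle, the main subtlety is that the corollary allows \emph{any} multiplicative ratio, not merely exact recovery. The reduction handles this gracefully because the YES instances have optimum zero: any finite multiplicative factor times zero is still zero, so the hypothetical algorithm is forced into the zero-error regime, at which point Lemma~\ref{lem:edge-subspace} supplies the dimension lower bound in the NO case. The one thing to be careful about in the write-up is to keep the two constants straight (the conjecture's inapproximability exponent $c$ versus the corollary's dimension-blow-up exponent, which I would rename to make the dependence $c' = c/2$ transparent).
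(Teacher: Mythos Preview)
Your proposal is correct and follows exactly the paper's approach: the same Hardt--Moitra edge-to-vector reduction with $m = |E| - r$ and $k = d$, the same use of Lemma~\ref{lem:edge-subspace} to lower-bound the dimension in the NO case by $dn^c/2$, and the same zero-optimum observation in the YES case to handle arbitrary multiplicative factors. Your write-up is in fact more detailed than the paper's (which is a two-sentence sketch), and your explicit handling of the constant via $c' = c/2$ is a welcome clarification.
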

\begin{proof}
We can use the same argument as before, and give a reduction from a gap version of S$r$ES.  If we set the number of outliers to be $|E|-r$, then in the YES case, there is a subspace of dimension $d$ containing $r$ edges, while in the NO case, any such subspace must have a dimension at least $d n^c/2$ (using Lemma~\ref{lem:edge-subspace}).
This completes the proof.
\end{proof}

\bibliographystyle{alpha}
\bibliography{bhaskara}

\newcommand{\etalchar}[1]{$^{#1}$}
\begin{thebibliography}{CLMW11}

\bibitem[BBV08]{Balcan08}
Maria-Florina Balcan, Avrim Blum, and Santosh Vempala.
\newblock A discriminative framework for clustering via similarity functions.
\newblock In {\em Proceedings of the Fortieth Annual ACM Symposium on Theory of
  Computing}, STOC '08, pages 671--680, New York, NY, USA, 2008. ACM.

\bibitem[BCC{\etalchar{+}}10]{Dkspaper}
Aditya Bhaskara, Moses Charikar, Eden Chlamtac, Uriel Feige, and Aravindan
  Vijayaraghavan.
\newblock Detecting high log-densities: an {\it n}$^{1/4}$ approximation for
  densest {\it k}-subgraph.
\newblock In {\em Proceedings of the 42nd ACM Symposium on Theory of Computing,
  STOC 2010, Cambridge, Massachusetts, USA}, pages 201--210, 2010.

\bibitem[BCG{\etalchar{+}}12]{dksgaps}
Aditya Bhaskara, Moses Charikar, Venkatesan Guruswami, Aravindan
  Vijayaraghavan, and Yuan Zhou.
\newblock Polynomial integrality gaps for strong sdp relaxations of densest
  k-subgraph.
\newblock In {\em ACM-SIAM Symposium on Discrete Algorithms, SODA 2012, Kyoto,
  Japan}, 2012.

\bibitem[Bru09]{Brubaker}
S.~Charles Brubaker.
\newblock Robust {PCA} and clustering in noisy mixtures.
\newblock In {\em Proceedings of the Twentieth Annual {ACM-SIAM} Symposium on
  Discrete Algorithms, {SODA} 2009, New York, NY, USA, January 4-6, 2009},
  pages 1078--1087, 2009.

\bibitem[CDK12]{Chlamtac}
Eden Chlamtac, Michael Dinitz, and Robert Krauthgamer.
\newblock Everywhere-sparse spanners via dense subgraphs.
\newblock In {\em 53rd Annual {IEEE} Symposium on Foundations of Computer
  Science, {FOCS} 2012, New Brunswick, NJ, USA, October 20-23, 2012}, pages
  758--767, 2012.

\bibitem[CGK{\etalchar{+}}17]{Silvio}
Flavio Chierichetti, Sreenivas Gollapudi, Ravi Kumar, Silvio Lattanzi, Rina
  Panigrahy, and David~P. Woodruff.
\newblock Algorithms for lp low-rank approximation.
\newblock In {\em Proceedings of the 34th International Conference on Machine
  Learning, {ICML} 2017, Sydney, NSW, Australia, 6-11 August 2017}, pages
  806--814, 2017.

\bibitem[Che08]{Cluster2}
Ke~Chen.
\newblock A constant factor approximation algorithm for k-median clustering
  with outliers.
\newblock In {\em Proceedings of the Nineteenth Annual ACM-SIAM Symposium on
  Discrete Algorithms}, SODA '08, pages 826--835, Philadelphia, PA, USA, 2008.
  Society for Industrial and Applied Mathematics.

\bibitem[CKMN01]{Charikar01}
Moses Charikar, Samir Khuller, David~M. Mount, and Giri Narasimhan.
\newblock Algorithms for facility location problems with outliers.
\newblock In {\em Proceedings of the Twelfth Annual Symposium on Discrete
  Algorithms, January 7-9, 2001, Washington, DC, {USA.}}, pages 642--651, 2001.

\bibitem[CLMW11]{CandesRobust}
Emmanuel~J. Cand\`{e}s, Xiaodong Li, Yi~Ma, and John Wright.
\newblock Robust principal component analysis?
\newblock {\em J. ACM}, 58(3):11:1--11:37, June 2011.

\bibitem[CSV17]{Charikar2017}
Moses Charikar, Jacob Steinhardt, and Gregory Valiant.
\newblock Learning from untrusted data.
\newblock In {\em Proceedings of the 49th Annual ACM SIGACT Symposium on Theory
  of Computing}, STOC 2017, pages 47--60, New York, NY, USA, 2017. ACM.

\bibitem[DKK{\etalchar{+}}16]{MoitraNew}
I.~Diakonikolas, G.~Kamath, D.~M. Kane, J.~Li, A.~Moitra, and A.~Stewart.
\newblock Robust estimators in high dimensions without the computational
  intractability.
\newblock In {\em 2016 IEEE 57th Annual Symposium on Foundations of Computer
  Science (FOCS)}, pages 655--664, Oct 2016.

\bibitem[DKS17]{Gauss1}
Ilias Diakonikolas, Daniel~M. Kane, and Alistair Stewart.
\newblock List-decodable robust mean estimation and learning mixtures of
  spherical gaussians.
\newblock {\em CoRR}, abs/1711.07211, 2017.

\bibitem[DV06]{DeshpandeVempala}
Amit Deshpande and Santosh Vempala.
\newblock Adaptive sampling and fast low-rank matrix approximation.
\newblock In Josep D{\'i}az, Klaus Jansen, Jos{\'e} D.~P. Rolim, and Uri Zwick,
  editors, {\em Approximation, Randomization, and Combinatorial Optimization.
  Algorithms and Techniques}, pages 292--303, Berlin, Heidelberg, 2006.
  Springer Berlin Heidelberg.

\bibitem[FB81]{ransac}
Martin~A. Fischler and Robert~C. Bolles.
\newblock Random sample consensus: A paradigm for model fitting with
  applications to image analysis and automated cartography.
\newblock {\em Commun. ACM}, 24(6):381--395, June 1981.

\bibitem[FK99]{FriezeKannan}
Alan~M. Frieze and Ravi Kannan.
\newblock Quick approximation to matrices and applications.
\newblock {\em Combinatorica}, 19(2):175--220, 1999.

\bibitem[FK01]{FeigeKilian}
Uriel Feige and Joe Kilian.
\newblock Heuristics for semirandom graph problems.
\newblock {\em Journal of Computing and System Sciences}, 63:639--671, 2001.

\bibitem[GKL{\etalchar{+}}17]{Cluster1}
Shalmoli Gupta, Ravi Kumar, Kefu Lu, Benjamin Moseley, and Sergei
  Vassilvitskii.
\newblock Local search methods for k-means with outliers.
\newblock {\em Proc. VLDB Endow.}, 10(7):757--768, March 2017.

\bibitem[GS12]{GuruswamiSinop}
V.~Guruswami and A.~K. Sinop.
\newblock Optimal column-based low-rank matrix reconstruction.
\newblock In {\em SODA}, 2012.

\bibitem[GVL96]{Golub}
Gene~H. Golub and Charles~F. Van~Loan.
\newblock {\em Matrix Computations (3rd Ed.)}.
\newblock Johns Hopkins University Press, Baltimore, MD, USA, 1996.

\bibitem[HL17]{Gauss3}
Samuel~B. Hopkins and Jerry Li.
\newblock Mixture models, robustness, and sum of squares proofs.
\newblock {\em CoRR}, abs/1711.07454, 2017.

\bibitem[HM13]{HardtMoitra}
Moritz Hardt and Ankur Moitra.
\newblock Algorithms and hardness for robust subspace recovery.
\newblock In {\em {COLT} 2013 - The 26th Annual Conference on Learning Theory,
  June 12-14, 2013, Princeton University, NJ, {USA}}, pages 354--375, 2013.

\bibitem[HR09]{HuberNew}
Peter~J. Huber and Elvezio~M. Ronchetti.
\newblock {\em Robust Statistics, 2nd Edition}.
\newblock Wiley, 2009.

\bibitem[Hub81]{Huber}
Peter~J. Huber.
\newblock {\em Robust Statistics}.
\newblock Wiley, 1981.

\bibitem[KLS17]{RaviShi}
Ravishankar Krishnaswamy, Shi Li, and Sai Sandeep.
\newblock Constant approximation for k-median and k-means with outliers via
  iterative rounding.
\newblock {\em CoRR}, abs/1711.01323, 2017.

\bibitem[KS17]{Gauss2}
Pravesh~K. Kothari and Jacob Steinhardt.
\newblock Better agnostic clustering via relaxed tensor norms.
\newblock {\em CoRR}, abs/1711.07465, 2017.

\bibitem[LRV16]{VempalaNew}
K.~A. Lai, A.~B. Rao, and S.~Vempala.
\newblock Agnostic estimation of mean and covariance.
\newblock In {\em 2016 IEEE 57th Annual Symposium on Foundations of Computer
  Science (FOCS)}, pages 665--674, Oct 2016.

\bibitem[McS01]{McSherry}
Frank McSherry.
\newblock Spectral partitioning of random graphs.
\newblock In {\em 42nd Annual Symposium on Foundations of Computer Science,
  {FOCS} 2001, 14-17 October 2001, Las Vegas, Nevada, {USA}}, pages 529--537,
  2001.

\bibitem[MMV16]{MMV}
Konstantin Makarychev, Yury Makarychev, and Aravindan Vijayaraghavan.
\newblock Learning communities in the presence of errors.
\newblock In {\em Proceedings of the 29th Conference on Learning Theory, {COLT}
  2016, New York, USA, June 23-26, 2016}, pages 1258--1291, 2016.

\bibitem[RS10]{sse1}
Prasad Raghavendra and David Steurer.
\newblock Graph expansion and the unique games conjecture.
\newblock In Leonard~J. Schulman, editor, {\em STOC}, pages 755--764. ACM,
  2010.

\bibitem[SCV18]{SteinhardtCV18}
Jacob Steinhardt, Moses Charikar, and Gregory Valiant.
\newblock Resilience: {A} criterion for learning in the presence of arbitrary
  outliers.
\newblock In {\em 9th Innovations in Theoretical Computer Science Conference,
  {ITCS} 2018, January 11-14, 2018, Cambridge, MA, {USA}}, pages 45:1--45:21,
  2018.

\bibitem[SgS90]{StewartSun}
G.~W. Stewart and Ji~guang Sun.
\newblock {\em Matrix Perturbation Theory}.
\newblock Academic Press, 1990.

\bibitem[XCM13]{Caramanis}
H.~Xu, C.~Caramanis, and S.~Mannor.
\newblock Outlier-robust pca: The high-dimensional case.
\newblock {\em IEEE Transactions on Information Theory}, 59(1):546--572, Jan
  2013.

\bibitem[XCS12]{pursuit}
H.~Xu, C.~Caramanis, and S.~Sanghavi.
\newblock Robust pca via outlier pursuit.
\newblock {\em IEEE Transactions on Information Theory}, 58(5):3047--3064, May
  2012.

\bibitem[YPCC16]{Chen}
Xinyang Yi, Dohyung Park, Yudong Chen, and Constantine Caramanis.
\newblock Fast algorithms for robust {PCA} via gradient descent.
\newblock {\em CoRR}, abs/1605.07784, 2016.

\end{thebibliography}

\newpage
\appendix
\section{Guessing the optimum}\label{sec:opt-guess}
In all our algorithms, we assume that we have a guess for the optimum error, up to a multiplicative factor of $(1+\eps)$.  Note that the error is $\fnorm{B-B_k}$, as defined in the introduction.  We now argue that we can always obtain a polynomial number of guesses, one of which is accurate.  This follows immediately if we can show that the error lies in a range $[L, U]$, where $U/L$ is at most $\exp (\poly (n, b))$, where $b$ is the total bit complexity of the input. (This is because we can discretize the range into $\poly (n, b)/\eps$ intervals using multiples of $(1+\eps)$.)

This is not immediate in our setting because $\fnorm{B- B_k}$ can actually be zero.  But we can show that if the error is {\em non-zero}, it can be lower bounded by $\exp(-\poly (n, b))$.  Such results are quite well-known (see~\cite{Golub}), but we sketch a short proof below. 

\paragraph{Exponential range for the optimum.}   The key claim is that if a $d \times k$ matrix $C$ has linearly independent columns, it has $\sigma_{\min} \ge \exp(-\poly (n, b))$, where $b$ is the total bit complexity of $C$.  This can be proved as follows. Note that $\sigma_{\min}$ is the smallest eigenvalue of $C^T C$.  Now the characteristic polynomial of $C^T C$ has coefficients that are all at most $\exp(\poly(n, b))$, as they are appropriate determinants.  Suppose the polynomial is $\sum_{i=0}^k a_i \lambda^i$.  By linear independence, we know that $a_0 \ne 0$. Further, $a_0$ is the determinant of $C^T C$, and hence is bounded from below by $\exp(-\poly(n, b))$; this is because the sum of a set of numbers of a certain precision is either zero or is at least the ``least count'' of that precision.  

Now, the magnitude of the smallest non-zero real root of any polynomial can be bounded from below by $|a_0|/(|a_0|+ |a_1| + \dots + |a_d|)$.  As the $a_i$ are all at most $\exp(\poly(n, b))$, the desired conclusion follows.

\end{document}